			\newcommand{\pref}{\ensuremath{\succsim}}
			\newcommand{\spref}{\ensuremath{\succ}}
				\newcommand{\ceil}[1]{\lceil #1 \rceil }
				\newtheorem{remark}{Remark}
				\newtheorem{assumption}{Assumption}
\begin{document}
\title[Representing the Insincere: Strategically Robust Proportional Representation]{Representing the Insincere: Strategically Robust Proportional Representation}  
\author{Barton E. Lee}

\begin{abstract}
Proportional representation (PR) is a fundamental principle of many democracies world-wide which employ PR-based voting rules to elect their representatives. The normative properties of these voting rules however, are often only understood in the context of sincere voting. 

In this paper we consider PR in the presence of strategic voters. We construct a voting rule such that for every preference profile there exists at least one costly voting equilibrium satisfying PR  with respect to voters' private and unrevealed preferences -- such a voting rule is said to be \emph{strategically robust}. In contrast, a commonly applied voting rule is shown not be strategically robust. Furthermore, we prove a limit on `how strategically robust' a PR-based voting rule can be; we show that there is no PR-based voting rule which ensures that every equilibrium satisfies PR. Collectively, our results highlight the possibility and limit of achieving PR in the presence of strategic voters and a positive role for mechanisms, such as pre-election polls, which coordinate voter behaviour towards equilibria which satisfy PR.
\end{abstract}


\maketitle


\section{Introduction}



\begin{quote}
The first principle of democracy [is] representation in proportion to numbers -- John Stuart Mill in \emph{Considerations on Representative Government} (1861).
\end{quote}

Multi-seat (or multi-winner) election outcomes which proportionally represent the diverse preferences of voters, are said to provide \emph{proportional representation} (PR). Informally speaking, PR captures the idea that if n\% of voters support a certain political party then roughly n\% of the seats should be allocated to this party. In democratic elections, PR has long been considered a major desideratum~\cite{Droo81a,Dumm84a,TiRi00a}. As per the political philosopher John Stuart Mill, PR is the ``first principle of democracy"~\cite{Mill61}. This view is also shared by many others, including~\cite{Wood97a} who commented that PR is ``a sine qua non for a fair election rule". In practice, PR is a fundamental goal of national elections world-wide including Australia, India, Ireland and Pakistan;  all of which elect parliamentary representatives via a voting rule understood to satisfy PR. Furthermore, countries which do not pursue PR in their national elections are often subject to vigorous debate from prominent electoral reform movements such as the Electoral Reform Society (UK), Fair Vote Canada, and FairVote (US). These movements have led to referenda over electoral reform in the UK (2011), Ontario (2007), and British Columbia (2005). In 2018, British Columbia is expected to hold another referendum for the adoption of a PR electoral system.

The family of voting systems used around the world to achieve PR, which we call PR-based voting rules, are only known to satisfy PR when voters are assumed to sincerely express their preferences.\footnote{For the case of the Single Transferable Vote (STV) rule we refer the reader to~\cite{Dumm84a}, and ~\cite{Wood94a}} However it has long been established that, under very general conditions, every non-dictatorial voting rule is vulnerable to manipulation by strategic, or self-interested, voters~\cite{Gibb73,Satt75}. Furthermore, empirical studies have shown that strategic voting indeed occurs in real-world elections which use PR-based voting rules~\cite{Fred14,GsSt14}. Thus, it is unclear in practice and in the presence of strategic voters whether or not PR-based voting rules satisfy PR.


Designing voting rules which achieve a normative property such as PR in the presence of strategic voters is a mechanism design problem. A typical approach in this literature, is to provide incentives for voters to truthfully reveal their preferences such that an election outcome which satisfies PR can be produced~\cite[Chapter 2]{Park01}. In this case, the PR-based voting rule is said to be \emph{strategyproof} and so, even in the presence of strategic voters, PR is expected to be satisfied. In general the previously mentioned result of Gibbard and Satterthwaite tells us that no such voting rule exists since there is no strategyproof non-dictatorial voting rule, let alone a rule which also satisfies PR. Crucially, this impossibility result of Gibbard and Satterthwaite relies on voters being able to possess arbitrary preferences. In reality however, voter preferences embody greater structure; this has led researchers to study elections and voting rules under assumptions such as correlated preferences~\cite{MaPa16,SBM15}, and with additional restrictions on preferences~\cite{Arro51,Blac48,Sen66}. One particular restriction of preferences is that of \emph{dichotomous} preferences over candidates~\cite{BMS05,BrFi78}, whereby voters view each candidate as either `approved' or `disapproved', but not both.\footnote{Note that a voter with dichotomous preferences over candidates does not, in general, have dichotomous preferences over (multi-winner) election outcomes.} This preference restriction has attracted particular attention since it completely circumvents the impossibility result of Gibbard and Satterthwaite; that is, there exists a non-dictatorial voting rule which is strategyproof~\cite{BrFi78,Webb78}. Unfortunately, even in this restricted setting, there is no strategyproof PR-based voting rule~\cite{Pete17}. This impossibility result of Peters highlights a stark barrier to achieving PR in presence of strategic voters, and suggests the need for an alternate approach which does not require truthful-revelation of preferences by voters.

In this paper we consider an alternate equilibrium approach and show that PR can indeed be achieved in the presence of strategic voters when voters have dichotomous preferences over candidates.\footnote{Justification for this equilibrium approach to achieving PR, over introducing further restrictions on voter preferences, is provided in section 2.1. after the statement of Theorem~\ref{JRnotstrat}.} We design a \emph{strategically robust} PR-based voting rule which is not strategyproof but ensures the existence of at least one (possibly insincere) costly voting equilibrium satisfying PR with respect to voters' private and unrevealed preferences. In contrast, a commonly applied voting rule is shown not to be strategically robust. Furthermore, we prove a limit on `how strategically robust' a PR-based voting rule can be; we show that there is no PR-based voting rule\footnote{We assume the voting rule is deterministic, resolute and anonymous.} which ensures that \emph{every} costly voting equilibrium satisfies PR. Collectively, these results highlight the possibility and limit of achieving PR in the presence of strategic voters and a positive role for mechanisms, such as pre-election polls, which coordinate voter behaviour towards equilibria which satisfy PR.

\paragraph{Contributions}

Our contributions are four-fold; firstly we introduce a new novel approach for the social choice literature to consider (possibly) insincere but desirable election outcomes; secondly, we provide positive results showing the existence of \emph{strategically robust} PR-based voting rules which satisfy PR in the presence of strategic voters; thirdly we provide an impossibility result for the full implementation of proportional representation. This latter result highlights a limit on `how robust' a PR-based voting can be to strategic voters. As a final conceptual contribution, our work sheds light on features of voting rules (such as election thresholds) which lead to strategic robustness and insights into how a given voting rule can be modified to be strategically robust and whilst still maintaining some of the original (sincere voting) properties. 





\paragraph{Outline}

The structure of the paper is as follows; section 2 presents the model and describes our equilibria refinement, and section 3 presents some background for the sincere voting setting. Section 4 includes our key results showing that indeed there exist insincere equilibria which guarantee proportional representation, subsection 4.1. includes our full-implementation impossibility result, the paper then concludes with a brief conclusion. 


\subsection{Related literature}



\noindent \textbf{Proportional representation axioms:} 
 In the context of multi-winner approval elections and voters with dichotomous preferences over candidates, the social choice community has intensely studied proportional representation axioms. In this setting a particularly compelling notion of proportional representation was formulated in~\cite{SFF+17a} called \emph{proportional justified representation} (PJR). This notion of proportional representation is the focus of this paper. Much of the literature has focused on computational concerns about computing election outcomes satisfying these, or similar, axioms when voters are assumed to vote sincerely~\cite{ABC+16a,AEH+18,BFJL16a,SFF+17a}. There also exists a substantial body of research which consider the computational complexity of identifying strategically beneficial actions and equilibria under different voting rules and with different assumptions~\cite{EMOS15a,EMOP16a,EOPR16a,Lee17,MPR08b,MPRZ08a,OMT14a}. However, to the best of our knowledge no attempts have been made to construct or ask whether there exists voting rules which satisfy a desirable notion, such as proportional representation, in the presence of strategic voters and in (possibly) insincere equilibria.\\

\noindent \textbf{Proportional representation in the presence strategic voters:} There are a number of papers which consider the implications of strategic voters in proportional representation voting systems~\cite{AuBa88,BaDi01,CoSh96,DeIa08,DIMP14,SlWh10}, however their research questions and results differ from this paper. In particular, these papers study the observed characteristics and properties of strategic equilibrium under PR systems; whilst, the focus of this paper is whether in equilibrium the unobserved property of PR (with respect to voters' private and unrevealed preferences) can be satisfied; or relatedly, how the observed equilibrium outcomes under PR systems qualitatively differ from the (unobserved) sincere voting outcome.\\

\noindent \textbf{Approval voting and dichotomous preferences:} Approval voting has been subject to intense research in recent years \cite{BrFi07c,LaSk17}. One of the appeals of approval voting is that under dichotomous preferences, a simple approval-based voting rule is strategyproof~\cite{BrFi78,Webb78}, hence escaping the impossibility results of~\cite{Gibb73,Satt75}. However, as shown by~\cite{Fish78,Vors07} the set of approval-based voting rules which are strategyproof is limited -- in particular, under mild conditions the approval voting rule can be completely and uniquely characterised by whether or not it is strategyproof. More recently, and of greater relevance to this paper,~\cite{Pete17} showed that a very weak form of proportional representation is incompatible with strategyproofness when considering deterministic voting rules. That is, there is no deterministic voting rule which is proportionally representative and strategyproof. Our research direction complements and contrasts this impossibility result by showing that, in fact, strategyproofness is not necessarily required to ensure proportional representation in equilibrium.  \\

\noindent \textbf{Implementation theory:} There is an extensive literature on the implementation of social choice functions/correspondences via equilibrium concepts such as Nash equilibria. The canonical \emph{full implementation} question is whether a mechanism can be designed such that every equilibrium produces an outcome which is considered `desirable' with respect to a given social choice function or correspondence -- however this literature does not provide answers to the questions posed and solved in this paper. In particular, the structure of mechanisms typically used in the implementation theory literature are not suited to standard voting environments. This is sometimes due to the intricacies of the mechanism or the informational requirements. For example, the classical literature pioneered by Maskin~(see \cite{Mask77, Mask85}) assumes that every player knows every other players' ``private" information, i.e., the entire profile of voter preferences.\footnote{In this classical setting, the players' preferences are common knowledge among other players but is ``private" and unknown from the perspective of the mechanism designer.} In the context of voting this is indeed an extreme assumption which we do not apply in our model. Such  informational requirements have been relaxed by a number of scholars~\cite{AbMa90,Jack92}, but still the mechanisms are not appropriate for the environment and focus of this paper. Despite this, and in line with the canonical full implementation question, we provide an impossibility of full implementation result under assumptions appropriate for standard voting environments. In this environment, we show that it is impossible to fully implement the social choice correspondence of proportional representation.

\section{The model, equilibria and costly voting}



\subsection{Model}
%
%
%

We introduce the following social choice model, let $N$ be a set of voters, $C$ be a set of candidates.\footnote{This model can be easily reformulated as a party-seat model by partitioning the candidate set into parties and deriving voter preferences over candidates from their preferences over parties.} We assume that voters have dichotomous preference over candidates; for a given voter $i\in N$, let $A_i\subseteq C$ denote voter $i$'s true preference (i.e., most preferred equivalence class of candidates), and hence $C\backslash A_i$ is the set, or equivalence class, of voter $i$'s unapproved, or disapproved, candidates. Let $\hat{A}_i$ be the reported (not necessarily truthful) approval ballot. If $\hat{A}_i=A_i$ then we say the reported ballot is \emph{truthful}, or \emph{sincere}, otherwise, $\hat{A}_i\neq A_i$ and we say that the reported ballot $\hat{A}_i$ is \emph{insincere}. We denote the profile of preferences and reported ballots by $A$ and $\hat{A}$; that is, $A=(A_1, \ldots, A_n)$ and $\hat{A}=(\hat{A}_1, \ldots, \hat{A}_n)$, respectively.

The goal of the election is to select a winning committee $W\subseteq C$ of some predetermined size $k<|C|$ (where $k$ is a positive integer). The value $k$ is common knowledge. We assume voter $i$'s (ordinal) preferences over election outcomes are as follows: given two committees $W$ and $W'$, 
\begin{align}\label{equation: preferences}
W\pref_i W' \iff |W\cap A_i|\ge |W'\cap A_i|.
\end{align}
The strict preference $\spref_i$ can be derived from the strict inequality. We emphasise that despite voters having dichotomous preferences over candidates, given~(\ref{equation: preferences}), voter preferences over election outcomes need not be dichotomous.

If voters face election outcomes which are random, the voter must compare probabilistic outcomes, or lotteries, of election outcomes. We assume voter preferences over lotteries are given by the upward-lexicographic extension ($ul$-extension). The $ul$-extension relates to the lexicographic preferences introduced by~\cite{Haus54}. Given the preferences of a voter over deterministic election outcomes, the $ul$-extension derives preferences over lotteries by comparing relative probabilities of the worst case outcomes in a lexicographic manner; that is, the voter prefers lotteries which minimise the probability of `bad' outcomes (we delay the formal definition for now). 

The $ul$-extension is a strong assumption to apply however when considering extensions of ordinal preferences to lotteries there are limited choices with the appropriate properties for equilibrium analysis. For example, the commonly applied first-order stochastic dominance extension ($sd$-extension) is not \emph{complete}; that is, two lotteries may not be comparable. Thus if equilibrium analysis were considered with the $sd$-extension, voters would only consider the restricted set of strategies which lead to comparable lotteries and hence many equilibria would arise simply because of the incompleteness of the preference extension. Whilst the $ul$-extension is complete, and hence voters can always compare the lotteries induced by any strategy deviation. Furthermore we note that any lottery preferred to another lottery under the $sd$-extension, is also preferred under the $ul$-extension. Thus, all of our equilibria results which assume the $ul$-extension also hold under the $sd$-extension. We refer the reader to~\cite{Cho12} for a detailed discussion of the $ul$-extension, other related extensions, and their properties.
%
%
%
%
%
%


We now formalise the $ul$-extension with respect to the preference $\pref_i$ over (deterministic) election outcomes introduced earlier. A lottery is a probability distribution, $L$, over the set of possible election outcomes $W$. Let $\pi_L^{(i)}(j)$ denote the probability that the lottery $L$ produces an election outcome $W$ such that $|W\cap A_i|=j$. Note that $\sum_{j=0}^{\min\{k, |A_i|\}} \pi_L^{(i)}(j)=1$  for all $i\in N$.
 
A voter weakly prefers a lottery $L$ over another lottery $L'$ if and only if;
\begin{enumerate}
\item there exists an integer $ \ell\in \{0, 1, \ldots, k\}$ such that $\pi_L^{(i)}(j)=\pi_{L'}^{(i)}(j)$ for all $j< \ell$ and $\pi_L^{(i)}(\ell)<\pi_{L'}^{(i)}(\ell)$; or
\item $\pi_L^{(i)}(j)=\pi_{L'}^{(i)}(j)$ for all $j$.
\end{enumerate}
Abusing notation slightly we denote the preference relation over lotteries by $\pref_i$. A voter strictly prefers $L$ over $L'$, i.e. $L\spref_i L'$, if $L\pref_i L'$ and there exists some integer $j\in \{0, \ldots, k\}$ such that $\pi_L^{(i)}(j)\neq \pi_{L'}^{(i)}(j)$.


\begin{remark}
The $ul$-extension of preferences conveys a high level of risk-aversion. When considering a more risk-averse preference extension, such as worst-case preferences\footnote{i.e., a voter values a lottery based on the least preferred outcome.}~\cite{Alon15}, then all of our results still hold. In fact, in some cases stronger results can be attained. This suggest that a higher degree of risk aversion is conducive to the goals of this paper; in particular, the construction of strategically robust proportional representation voting rules. The opposite can be said for risk-loving preference extensions.
\end{remark}

We now introduce the proportional representation axiom of proportional justified representation (PJR), this axiom was considered in~\cite{AEH+18,BFJL16a,SFF+17a}. PJR captures the idea that cohesive groups of voters, who unanimously agree on some set of candidates as being `appproved', should receive representation in proportion to their size. 

\begin{definition}\emph{[Proportional Justified Representation (PJR)]}\label{definition: PJR} Given the (truthful) ballot profile $A=(A_1, \ldots, A_n)$ over a candidate set $C$ and a target committee of size $k$, we say that a set of candidates $W$ of size $k$ satisfies proportional justified representation (PJR) for $(A,k)$ if for every positive integer $\ell \le k$, $\forall X\subseteq N$:
$$|X|\ge \ell n/k\quad \text{and}\quad |\cap_{i\in X} A_i|\ge \ell \implies |W\cap(\cup_{i\in X} A_i)|\ge \ell.$$
In the case of a (random) lottery $L$ over election outcomes, we say that PJR is satisfied by $L$ if every non-zero probability election outcome satisfies PJR. That is, PJR is satisfied with probability one.
\end{definition}


%

We shall assume throughout this paper that the number of candidates to be elected $k$ divides the number voters $n$. This assumption avoids the complications surrounding non-integer values of $n/k$ in the proportional representation definition (Definition~\ref{definition: PJR}). For general values of $k$ and $n$, and under sincere voting, PJR is considerably more difficult to satisfy -- in fact, it was posed as an open problem in~\cite{SFF+16} and eventually resolved independently in~\cite{BFJL16a} and~\cite{SFF16a}. We formalise the assumption below, but will always state our result explicitly with the condition that `$k$ divides $n$' when required.


\begin{assumption}\emph{[`$k$ divides $n$']}\label{assumption: k divides n}
Assume that the number of candidates to be elected, $k$, divides the number of voters, $n$. 
\end{assumption}

\begin{remark}
For general values of $k$ and $n$, all of our results follow immediately for a slightly weaker version of the PJR axiom. The weaker version specifies that cohesive groups of voters $X$ of size $\ell \ceil{n/k}$ require $\ell$ representatives, i.e., replace $|X|\ge \ell n/k$ with $|X|\ge \ell \ceil{n/k}$ in the (Definition~\ref{definition: PJR}).
\end{remark}

The following result states that a PJR outcome exists for all preference profiles, we refer the reader to~\cite{SFF+17a} for a proof.

\begin{theorem}\emph{\cite{SFF+17a}}
A PJR outcome always exists.
\end{theorem}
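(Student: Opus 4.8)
The plan is to prove existence concretely by exhibiting a single well-defined voting rule and showing that its output always satisfies PJR. Since a winning committee is chosen from the finite ground set $C$, there are only $\binom{|C|}{k}$ candidate committees, so any rule that optimises a fixed objective over all size-$k$ committees is guaranteed to return an outcome; it therefore suffices to pick an objective whose optimiser provably meets the PJR condition. I would use a load-balancing (Phragm\'en-style) objective: for a committee $W$ of size $k$, distribute one unit of ``load'' per selected candidate fractionally among the voters who approve that candidate, and define the cost of $W$ to be the lexicographically sorted (descending) vector of total voter loads under the best such distribution. Let $W^*$ minimise this cost.

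First I would record the global averaging fact: the total load summed over all voters is exactly $k$ (one unit per selected candidate), so the maximum voter load is at least $k/n$, and at a minimiser this bound pins down the behaviour of the heavily loaded voters. Next I would argue by contradiction that $W^*$ satisfies PJR. Suppose it fails: there is an integer $\ell \le k$ and a group $X\subseteq N$ with $|X|\ge \ell n/k$, $|\cap_{i\in X} A_i|\ge \ell$, yet $|W^*\cap(\cup_{i\in X} A_i)|\le \ell-1$. Because each voter in $X$ only bears load from selected candidates they approve, and at most $\ell-1$ such candidates are selected, the total load borne by $X$ is at most $\ell-1$; hence the average load in $X$ is at most $(\ell-1)/|X| \le (\ell-1)k/(\ell n) < k/n$. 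So the voters of $X$ are strictly below average, while $|\cap_{i\in X} A_i|\ge \ell > \ell-1$ forces at least one commonly approved candidate to be unselected.

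I would then add such a commonly approved candidate, let the lightly loaded voters of $X$ absorb its unit of load, and delete a candidate supported (in part) by a currently maximally loaded voter, rebalancing so as to strictly lexicographically decrease the load vector, contradicting the optimality of $W^*$. The main obstacle is exactly this exchange step: making the swap precise so that the committee size stays equal to $k$ and the lexicographic cost strictly improves. The delicate points are (i) choosing which candidate to remove so that a peak-load voter is genuinely relieved without immediately creating a new, equally high peak, and (ii) carrying out the fractional redistribution of load over $X$ subject to the integrality constraint $|\cap_{i\in X} A_i|\ge \ell$; quantifying the slack $(\ell-1)k/(\ell n) < k/n$ and converting it into a strict improvement is where the real work lies.

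As an alternative route that sidesteps this combinatorial analysis, one can instead take $W^*$ to maximise the concave Proportional Approval Voting objective $\sum_{i\in N}\sum_{t=1}^{|W\cap A_i|} 1/t$ and invoke the known fact that this rule satisfies the stronger axiom of extended justified representation, which immediately implies PJR; but that merely defers the combinatorial core to a separate nontrivial theorem, so I would present the load-balancing argument as the primary proof.
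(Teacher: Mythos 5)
First, a point of comparison: the paper does not actually prove this statement. It is imported wholesale from \cite{SFF+17a} (``we refer the reader to~\cite{SFF+17a} for a proof''), and under the paper's standing Assumption~\ref{assumption: k divides n} it would also follow from Proposition~\ref{proposition: GreedyJR JR and not PJR} (the sincere GreedyMonroe outcome satisfies PJR), which is likewise cited rather than proven. So any self-contained argument you give is already doing more than the paper itself. Your overall strategy --- optimise a Phragm\'en-style load-balancing objective over the finitely many size-$k$ committees and extract PJR from optimality --- is sound, and it parallels how the general-$(n,k)$ existence question was in fact resolved in the literature (via Phragm\'en-type rules), though with the leximax optimisation variant rather than the sequential one. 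Your averaging step is correct: if PJR fails for $(X,\ell)$, the total load borne by $X$ is at most $\ell-1$, so the average load in $X$ is at most $(\ell-1)k/(\ell n)<k/n$, and some candidate in $\cap_{i\in X}A_i$ is unelected.

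The genuine gap is exactly where you flag it: the exchange step is asserted, not proven, and as stated it does not yet yield a contradiction --- several voters may share the peak load, and the unit of load you add to $X$ might itself create a new peak. The step can be closed as follows, and none of it is mysterious. Let $t^\ast$ be the maximum load in the optimal distribution for $W^\ast$, so $t^\ast\ge k/n$. If $t^\ast=k/n$, then since the total load is exactly $k$, \emph{every} voter carries exactly $k/n$, contradicting your own computation that $X$ averages strictly below $k/n$; hence $t^\ast>k/n$. Now take $c\in(\cap_{i\in X}A_i)\backslash W^\ast$, take $d$ to be \emph{any} committee member contributing positive load to some voter at the peak $t^\ast$, and set $W'=(W^\ast\backslash\{d\})\cup\{c\}$. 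Exhibit a feasible (not necessarily optimal) distribution for $W'$: keep all other candidates' shares, delete $d$'s unit, and water-fill $c$'s unit over the least-loaded voters of $X$. The water level is at most $((\ell-1)+1)/|X|=\ell/|X|\le k/n<t^\ast$, so no voter of $X$ rises to the peak; every positive bearer of $d$ strictly drops, including at least one peak voter; nobody else changes. The sorted load vector of this feasible distribution is therefore lexicographically strictly below the optimal vector of $W^\ast$, and the optimal distribution for $W'$ is weakly below the feasible one --- contradicting leximax-minimality of $W^\ast$. Note also that your worry about the ``integrality constraint $|\cap_{i\in X}A_i|\ge\ell$'' in the redistribution is a non-issue: $c$ is approved by all of $X$, so its load may be placed on any subset of $X$. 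Your fallback via PAV and EJR is valid but, as you concede, rests on an external nontrivial theorem --- which, in fairness, is no further from self-containment than the paper's own citation of \cite{SFF+17a}.
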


Given a voting rule which takes $(\hat{A},k)$ as an input we denote the (possibly degenerate) lottery of election outcomes as $L_{\hat{A}}$. Note that $\hat{A}$ is the reported ballot profile and need not be truthful, i.e. it may be the case that $\hat{A}\neq A$.

\begin{definition}\emph{[strategyproof]}
A voting rule is said to be strategyproof if for all $i\in N$ 
$$L_{(A_i, \hat{A}_{-i})}\pref_i L_{(\hat{A}_i, \hat{A}_{-i})}\qquad \text{for all\, } \hat{A}_i \text{ and for all }\hat{A}_{-i}.$$
\end{definition}

As mentioned in the introduction,~\cite{Pete17} showed that there is in fact no (deterministic and resolute\footnote{A resolute voting rule produces a single election outcome, rather than a set of outcomes.}) voting rule which is strategyproof and satisfies a very weak notion of proportional representation. An immediate implication of the results is that PJR is incompatible with strategyproofness.


\begin{theorem}[\cite{Pete17}]\label{JRnotstrat} 
If $|C|, |N|\ge 4$ and $k\ge 2$ there exists no deterministic and resolute voting rule that is strategyproof and satisfies PJR. 
\end{theorem}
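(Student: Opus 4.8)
The plan is to prove the impossibility by contradiction at the smallest admissible parameters and then embed this base case into every larger instance. Concretely, I would first reduce to the two-seat case: given any instance with $k\ge 2$ and $k\mid n$, reserve $k-2$ seats by adding $k-2$ dedicated candidates, each approved \emph{only} by its own block of $n/k$ voters, so that the $\ell=1$ clause of Definition~\ref{definition: PJR} forces all $k-2$ of them into $W$; padding $C$ with never-approved candidates then leaves a strategyproof PJR rule on the two remaining seats, contested by the remaining $2(n/k)$ voters. It thus suffices to derive a contradiction for two seats with quota $q=n/k$, which for exposition I would carry out at $q=2$, i.e. $n=4$ and $C=\{a,b,c,d\}$; the general quota only requires replacing each voter below by a block of $q$ identical ballots and letting a single block member perform the deviations.

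Second, I would isolate the only consequence of PJR that the argument uses, namely its $\ell=1$ clause: whenever two voters report a common candidate, $W$ must intersect the union of their reported sets, and in particular if two voters both report the \emph{singleton} $\{x\}$ then $x\in W$. Under a deterministic resolute rule every induced lottery $L_{\hat A}$ is degenerate, so for the strategyproofness comparisons the $ul$-extension collapses to the ordinal preference~(\ref{equation: preferences}): voter $i$ gains by misreporting exactly when the misreport yields a committee $W'$ with $|W'\cap A_i|>|W\cap A_i|$, where $W$ is her truthful outcome.

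Third, the engine of the argument is a \emph{pinning} step. For example, at the profile $\hat A=(\{a,b\},\{a\},\{b\},\{b\})$ the $\ell=1$ clause forces $b\in W$ (from voters $3,4$), leaving the PJR-compliant committees $\{a,b\},\{b,c\},\{b,d\}$; a voter $1$ with true preference $A_1=\{a,b\}$ values these $2,1,1$ respectively. If the rule returned $\{b,c\}$ or $\{b,d\}$, voter $1$ would deviate to report $\{a\}$, reaching $(\{a\},\{a\},\{b\},\{b\})$, which forces the unique committee $\{a,b\}$ and raises her count from $1$ to $2$; strategyproofness therefore \emph{pins} the outcome at the original profile to $\{a,b\}$. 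I would then build a short chain of such profiles, linked by single-voter report changes and interleaving the pinning constraints of different voters (whose true preferences I vary along the chain), until two pinning arguments force one profile to two incompatible committees, producing the contradiction.

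The main obstacle is precisely the freedom a resolute rule retains because PJR rarely determines $W$ uniquely: on most profiles several committees are compliant and give the relevant voter equal approval counts, so strategyproofness is silent and the adversarial tie-breaking can try to dodge each manipulation. Overcoming this requires choosing profiles whose PJR constraints are tight enough that some compliant committee is strictly dominated for a voter relative to the forced outcome of a neighbouring profile — the role played by the two ``spare'' candidates $c,d$, which is exactly why $|C|\ge 4$ is needed. A further subtlety I expect to confront is that a single profile cannot be double-pinned to two \emph{disjoint} committees at once, since the singleton-forcing roles of the four voters cannot simultaneously anchor $\{a,b\}$ and $\{c,d\}$; hence the contradiction cannot be localised to one profile and must instead be propagated along the chain, and verifying that every branch of the resulting case analysis over the rule's choices terminates in a strictly profitable deviation is the delicate bookkeeping at the heart of the proof.
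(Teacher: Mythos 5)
First, a point of orientation: the paper does not prove this theorem at all — it is imported verbatim from \cite{Pete17}, so there is no in-paper proof to match your attempt against. The relevant comparison is with Peters' own argument, and there your high-level skeleton is in fact the right one: Peters also proceeds by establishing an impossibility at a small base case and then lifting it to larger $n$, $m$, and $k$ via embedding/induction lemmas of exactly the flavour you describe (dedicated blocks of $n/k$ voters pinning the extra seats through the $\ell=1$ clause, padding with unapproved candidates). Your pinning mechanism is also sound as far as it goes: at $(\{a,b\},\{a\},\{b\},\{b\})$ the compliant committees are indeed $\{a,b\},\{b,c\},\{b,d\}$, the deviation to $(\{a\},\{a\},\{b\},\{b\})$ forces $\{a,b\}$ uniquely, and since the rule is deterministic and resolute the $ul$-extension does collapse to the ordinal comparison, so strategyproofness legitimately pins the first profile to $\{a,b\}$.

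The genuine gap is that the heart of the proof — the chain of profiles whose pinning constraints collide — is never exhibited; you give one pinning step and an IOU for ``the delicate bookkeeping.'' This is not a routine omission: that bookkeeping is precisely the hard content of the theorem. Peters could not complete it by a short hand-built chain either; his base-case impossibility was obtained with computer assistance (a SAT encoding over all small profiles, from which a minimal unsatisfiable core was extracted), and the resulting case analysis is large because, as you yourself observe, on most profiles PJR leaves several compliant committees among which the relevant voter is indifferent, so the adversarial tie-breaking has many escape routes. Until every branch is closed, your argument establishes nothing beyond single local constraints. Two further loose ends in the lifting: (i) the theorem as stated requires no divisibility hypothesis, but your reduction assumes $k\mid n$, so the case $k\nmid n$ (where cohesive groups need $\lceil \ell n/k\rceil$ voters and the forcing arguments change) is simply not covered; (ii) your $n$-lifting by cloning each voter into a block of $q$ identical ballots needs re-verification profile by profile, since a unilateral deviation by one block member can drop a forcing group from quota $q$ to $q-1$ voters and thereby unpin the deviated profile — it happens to work in your displayed example (the deviator joins, rather than leaves, the quota group), but nothing guarantees this along an unspecified chain, which is why Peters proves separate induction lemmas for increasing $n$, $m$, and $k$ rather than cloning voters wholesale.
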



The above result highlights the necessity for an alternate approach to achieving proportional representation, or the introduction of additional assumptions to circumvent this impossibility result. In the social choice literature for similar, but distinct, impossibility results there is a large body of work which consider restrictions of voter preferences to achieve possibility results. Some preference restrictions include single-peakedness of preferences~\cite{Arro51,Blac48}, or correlations of voter preferences~\cite{SBM15,MaPa16}. In this paper, we consider an alternate approach by moving to equilibrium analysis instead. 

This equilibrium approach is justified on the basis that; (1) the axiom of proportional representation is focused on the idea that voters have diverse preferences, and the difficulty of satisfying this axiom is due to precisely this diversity. Thus, any restriction to the diversity of voter preferences is weakening both the constraint imposed by the PR axiom and also the conflict between voter strategies and preferences -- in a sense this approach is `giving up too much'. (2) the impossibility result of~\cite{Pete17} holds for a setting where voter preferences are already assumed to be restricted to dichotomous preferences; thus, it is hard to justify even further preference restrictions independent of the PR axiom. (3) The restriction of preferences to single-peakedness~\cite{Arro51,Blac48}, or value-restricted preferences~\cite{Sen66}, though at times seemingly innocuous, has little empirical support~\cite{Matt11a}.


Before concluding this subsection, we note that impossibility result of~\cite{Pete17} has only been proven for deterministic voting rules. Indeed in various social choice settings it has been shown that randomised voting rules can sometimes be both necessary and sufficient to guarantee strategyproofness without resorting to a dictatorship; for example see~\cite{ABBM14,AzYe14,CLPP13,Proc10}. Thus, when considering randomised voting rules is it unclear whether there is an incompatibility between strategyproof and PJR.



\subsection{Equilibria and costly voting}


This subsection formalises what is meant by a (pure) Nash equilibrium and describes the strategic, or behavioural, effect of a costly voting environment. The costly voting environment leads to 
additional structure on voter preferences which are similar to those attained by the models considered in~\cite{DeIa05,PaRo83,EMOS15a}.\footnote{In section 4 we define a costly voting refinement of the strong Nash equilibrium concept.}


\begin{definition}\emph{[Best response]}
If a voter $i\in N$ faces a reported ballot profile $\hat{A}_{-i}$ and 
$$L_{(\hat{A}_i, \hat{A}_{-i})}\pref_i L_{(\hat{A}_i', \hat{A}_{-i})}\quad \text{ for all $\hat{A}_i'$,}$$ then $\hat{A}_i$ is a best response for voter $i$.
\end{definition}

\begin{definition}\emph{[Pure-Nash Equilibria (PNE)]}
A reported ballot profile $\hat{A}$ is a pure-Nash equilibrium (PNE) if for every voter $i\in N$ the ballot $\hat{A}_i$ is a best response to $\hat{A}_{-i}$. If the lottery $L$ is the output of a PNE reported profile $\hat{A}$, then we say that $L$ is supported by a PNE. 
\end{definition}


We also define what is meant by a pivotable voter. Given a voting rule and a voting instance; informally, a voter is said to be pivotable if they can change the election outcome by changing their reported ballot. Thus, when checking whether a profile of reported ballots is a PNE it suffices to consider only pivotable voters.

\begin{definition}\emph{[Pivotal voter]}
Given a voting rule and a reported ballot profile $\hat{A}$, a voter $i\in N$ is said to be pivotable if there exists a report $\hat{A}_i'\neq \hat{A}_i$ such that 
$$L_{\hat{A}}\neq L_{(\hat{A}_i', \hat{A}_{-i})}.$$
Otherwise, the voter is said to be non-pivotable. 
\end{definition}

We now define how voters behave in a costly voting environment and define the corresponding equilibrium concept.\footnote{ In the literature this behavioural tendency has been referred to a lazy voting~\cite{EMOS15a}, voting with abstention and also costly voting.} Consideration of a costly voting environment is motivated by the fact that voting is in fact a costly activity, and applying game-theoretic analysis to voting games with costless voting leads to the existence of equilibria which are not observed in practice. The costly voting environment can be modelled by the introduction of a small cost experienced by voters who participate in the election. It is important that the cost is sufficiently small to ensure that only voters who are indifferent between the election outcome when they participate and when they abstain are incentivised to abstain.

\begin{definition}\emph{[Costly voting]}
Under costly voting, if a voter $i\in N$ faces a reported ballot profile $\hat{A}_{-i}$ and $L_{(\emptyset, \hat{A}_{-i})}\pref_i L_{(\hat{A}_i', \hat{A}_{-i})}$ for all $\hat{A}_i'$,
then voter $i$ will optimally report $\hat{A}_i=\emptyset$. 
\end{definition}

\begin{definition}\emph{[Costly voting equilibrium]}
A reported ballot profile $\hat{A}$ is a costly voting equilibrium if $\hat{A}$ is a pure-Nash equilibrium and voters experience costly voting.
\end{definition}


\section{Achieving proportional representation under sincere voting}


This section introduces a strategically robust proportional representation voting rule, called GreedyMonroe, and states properties of the voting rule which hold under sincere voting. We show that under sincere voting the rule satisfies proportional justified representation (PJR), and is not strategyproof. 




The original GreedyMonroe voting rule was introduced by~\cite{SFS15a}, and further studied by~\cite{EFSS17a,SFF+17a}. We introduce a variant of the GreedyMonroe voting rule which maintains the key representation properties of the original rule but ensures strategic robustness, for simplicity we still refer to the voting rule as GreedyMonroe.

 
Let $k$ denote the number of candidates to be elected, and $n$ the number of voters. Our version of the GreedyMonroe voting rule asks each voter to report their dichotomous preferences via an approval ballot $\hat{A}_i$, then proceeds as follows: In the first-stage, the rule considers the set of candidates $C^*$ who have approval support $\ge n/k$ and then elects a candidate uniform at random from $C^*$ and removes $\ceil{n/k}$ approval ballots of voters who supported this now-elected candidate. With the updated approval ballot profile the rule iteratively repeats the previous step until either a committee of size $k$ is elected, or $<k$ candidates have been elected and no candidate receives approval $\ge n/k$. In the latter case, a second-stage is used to uniformly at random select the remaining candidates from the set of unelected candidates, i.e., regardless of the relative approval scores.

Note that the voting rule ensures that every elected candidate $c$ in the first-stage represents at least $n/k$ distinct voters, and any voters who gave support for candidate $c$ beyond $\ceil{n/k}$ is able to contribute their vote to another so-far-unelected candidate on their ballot. This process bears some resemblance to a discrete re-weighting version of the well-known Single Transferable Vote (STV), which is defined for strict-order preferences and used world-wide for achieving proportional representation in national elections.\footnote{STV is used in countries such as Australia, India, Ireland and Pakistan. See within~\cite{AzLe17b} for a general formulation of the STV voting rule and brief overview of its history.}

The voting rule breaks ties using a uniform random tie-breaking rule. That is, given a tie between a candidate set $C^*\subseteq C$, each candidate is selected with probability $\frac{1}{|C^*|}$ -- the selected candidate is the `winner' of the tie-break. We provide a formal description of the voting rule in Algorithm~\ref{algo:Greedy Monroe}.


		\begin{algorithm}[]
								  \caption{(modified) GreedyMonroe algorithm}
								  \label{algo:Greedy Monroe}

								\begin{algorithmic}
									\REQUIRE  Election size $k$, candidate set $C$, reported ballot profile $\hat{A}$.
																		\ENSURE A winning committee $W\subseteq C$ of size $k$.
								\end{algorithmic}
								\begin{algorithmic}[1]
									\STATE Initialise $\tilde{W}=\emptyset, C'=C, \hat{A}'=\hat{A}$.								
									\STATE  Let $C^*$ be the set of candidates $c'\in C'$ such that $s(c', \hat{A}'):=|\{i\in N\, :\, c'\in \hat{A}_i'\}|\ge n/k.$
									\STATE If $|\tilde{W}|<k$ and $C^*\neq \emptyset$, select some candidate $c^*\in C^*$ by a uniform random tie-breaking rule. Set $\tilde{W}\mapsto \tilde{W}\cup c^*$, then remove precisely $\ceil{n/k}$ ballot profiles from $\hat{A}'$ with $c^*\in\hat{A}_i'$ (choosing voters uniformly at random) and redefine $C'\mapsto C'\backslash \{c^*\}$, and then repeat step 2.
									\STATE		If $C^*=\emptyset$ or $|\tilde{W}|<k$, then select the remaining $(k-|\tilde{W}|)$ candidates from $C\backslash \tilde{W}$ via a uniform random tie-breaking rule, to form a committee $W$ of size $k$ such that $\tilde{W}\subseteq W$.				
%
%
%
%
								\end{algorithmic}
							\end{algorithm}
%
%
%

We highlight three strategically important features of the GreedyMonroe voting rule (Algorithm~\ref{algo:Greedy Monroe}): (1) the removal of ballots after the election of candidates means that each voter's ballot is only considered once, and hence approving of more than one candidate is never strictly optimal over approving a single candidate; (2) the sharp threshold of $n/k$ creates an incentive for voters to not provide additional support to candidates who already have $\ge n/k$ votes, (3) the quota-filling uniform random selection of candidates creates a lottery over election outcomes. This lottery acts as a `punishment' which creates an incentive for voters to coordinate themselves into group of $\ceil{n/k}$ as to avoid (or minimise) the risk of the partially random election outcome. 


In the special case that `$k$ divides $n$', i.e., the number of candidates to be elected $k$ divides the number of voters $n$, then the GreedyMonroe rule satisfies desirable proportional representation properties. It was shown in~\cite{SFF+17a}, that when `$k$ divides $n$'  voting rules such as GreedyMonroe satisfies PJR when voters report sincerely. We state the result below and refer the reader to~\cite{SFF+17a} for details of the proof.




\begin{proposition}\emph{\cite{SFF+17a}}\label{proposition: GreedyJR JR and not PJR}
If `$k$ divides $n$', GreedyMonroe satisfies PJR when voters are sincere. 
\end{proposition}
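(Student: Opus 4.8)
The plan is to prove PJR by contradiction for an arbitrary fixed realization of the algorithm's random choices; since every bound used will be realization-independent, the conclusion holds for every outcome of positive probability, which is exactly what Definition~\ref{definition: PJR} requires of a lottery. Fix a positive integer $\ell\le k$ and a set $X\subseteq N$ with $|X|\ge \ell n/k$ and $|\cap_{i\in X}A_i|\ge \ell$, and write $B=\cap_{i\in X}A_i$ and $U=\cup_{i\in X}A_i$. Suppose, for contradiction, that the returned committee $W$ satisfies $|W\cap U|=t<\ell$.

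First I would set up a ballot-accounting argument. Because voters are sincere and $k$ divides $n$, every first-stage election removes exactly $\ceil{n/k}=n/k$ ballots, and this removal is always feasible since a candidate enters $C^*$ only when it is approved by at least $n/k$ surviving voters. A voter $i\in X$ loses their ballot only when some candidate in $A_i\subseteq U$ is elected, and every candidate elected in the first stage lies in $W$; hence the candidates that can ever remove an $X$-ballot all belong to $W\cap U$, a set of size $t$. Each such candidate removes at most $n/k$ ballots of voters in $X$, so at most $t\cdot n/k\le (\ell-1)n/k$ voters of $X$ ever lose their ballot. As $|X|\ge \ell n/k$, at least $\ell n/k-(\ell-1)n/k=n/k$ voters of $X$ retain their full (sincere) ballot throughout the entire first stage; call this surviving set $X'$, so $|X'|\ge n/k\ge 1$.

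Next I would exhibit an unelected common candidate. Since $|B|\ge \ell>t\ge |W\cap B|$, some $c\in B$ is never elected, so $c$ remains in $C'$ at the end of the first stage. Every voter in $X'$ approves $c$ and still has an active ballot, so the support of $c$ at that point is at least $|X'|\ge n/k$, placing $c$ in $C^*$. I then close the argument using the two ways the first stage can terminate. If it ended because $C^*=\emptyset$, this contradicts $c\in C^*$. If instead it ended because $|\tilde{W}|=k$, then $k$ first-stage elections removed exactly $k\cdot n/k=n$ ballots, i.e.\ all ballots, contradicting $|X'|\ge n/k\ge 1$. Either way we reach a contradiction, so $t\ge \ell$ and PJR holds.

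I expect the main obstacle to be the careful handling of the two termination branches together with the integrality afforded by the assumption $k\mid n$: the whole accounting hinges on each first-stage step removing exactly $n/k$ ballots, so that the surviving-voter count is a clean difference of multiples of $n/k$, and on recognising that the committee-full branch forces every ballot to be consumed. A secondary point requiring care is that all counting inequalities must be stated per realization of the random tie-breaking and random ballot removal, so that the contradiction is obtained for each positive-probability outcome and PJR is genuinely satisfied with probability one.
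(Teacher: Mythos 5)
Your proof is correct. Note first that the paper itself offers no proof of this proposition: it is stated as imported from the cited reference (``we refer the reader to \cite{SFF+17a} for details of the proof''), so you have supplied a self-contained argument where the paper has none. Your argument is the standard quota-accounting proof for greedy rules of this type, and every step checks out: removal is always feasible because a candidate enters $C^*$ only with at least $n/k$ surviving supporters, so each first-stage election consumes exactly $\ceil{n/k}=n/k$ ballots; an $X$-ballot can only be consumed by a first-stage winner lying in $A_i\subseteq \cup_{i\in X}A_i$, and $\tilde{W}\subseteq W$, so at most $t\cdot n/k\le(\ell-1)n/k$ ballots of $X$ are ever removed, leaving $|X'|\ge n/k$ intact; and since $|W\cap B|\le t<\ell\le|B|$ some common candidate $c\in B$ survives in $C'$ with support at least $n/k$, contradicting termination via $C^*=\emptyset$, while termination via $|\tilde{W}|=k$ consumes all $k\cdot n/k=n$ ballots and contradicts $X'\neq\emptyset$. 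Two features of your write-up add genuine value beyond a citation: the paper's GreedyMonroe is a \emph{modified, randomized} variant (uniform tie-breaking, uniformly random ballot removal, and a random second stage), so the deterministic argument of the cited work does not literally apply, and your observation that every bound is realization-independent is exactly what is needed to conclude PJR holds with probability one, as Definition~\ref{definition: PJR} demands for lotteries; you also correctly note that second-stage (quota-filling) winners never remove ballots, which is the one place a careless adaptation could go wrong.
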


As stated in the previous section, a deterministic voting rule which satisfies PJR and is strategyproof does not exist~\cite{Pete17}. This result equally applies to the case where `$k$ divides $n$'.  The GreedyMonroe voting rule satisfies PJR (under sincere voting), however, it is not deterministic. Nonetheless, we prove in the following theorem (via an example) that it is still not strategyproof -- the proof can be found in the appendix.

%

\begin{theorem}\label{theorem: GreedyJR not strategyproof}
The GreedyMonroe is not strategyproof. 
\end{theorem}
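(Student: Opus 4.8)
The plan is to disprove strategyproofness by exhibiting a single instance together with a single voter whose truthful ballot is strictly dominated, under the $ul$-extension, by an insincere ballot while the other reports are held fixed. The strategic lever I would exploit is precisely feature (2) noted after the algorithm: the sharp $n/k$ threshold combined with the random removal of $\ceil{n/k}$ ballots. When a voter approves both a \emph{popular} candidate, who will clear the threshold regardless of the voter, and a \emph{marginal} candidate, who needs this voter's support to reach it, truthfully approving the popular candidate risks having the voter's ballot removed when that candidate is elected, thereby wasting the voter's (transferable) support for the marginal candidate. Dropping the approval of the popular candidate protects that support.

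Concretely, I would take $k=2$ and $n=4$, so that $n/k=\ceil{n/k}=2$, with candidates $\{a,b,c\}$. Let the strategic voter $1$ have true preference $A_1=\{a,b\}$, and fix the others as $\hat{A}_2=\hat{A}_3=\{a\}$ and $\hat{A}_4=\{b\}$. Here $a$ is popular (three approvers, including voter $1$) while $b$ is marginal (at the threshold only because of voter $1$). I would then compare the lotteries induced by the truthful report $\hat{A}_1=\{a,b\}$ and the insincere report $\hat{A}_1=\{b\}$.

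Under the truthful report both $a$ and $b$ lie in $C^*$, so one of them is elected first with probability $1/2$ each; the only channel producing an inferior outcome is the branch where $a$ is elected first \emph{and} voter $1$ is among the two supporters of $a$ removed uniformly at random, which can strand $b$ below threshold and push the final seat into the random second stage. Enumerating these branches yields a lottery that places strictly positive probability (namely $1/6$) on voter $1$ obtaining only one approved winner. Under the insincere report $\{b\}$, voter $1$ is no longer a supporter of $a$, so whichever of $a,b$ is drawn first its two supporters are removed exactly, the other then clears the threshold, and the committee is $\{a,b\}$ with certainty; voter $1$ obtains both approved winners.

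The conclusion then follows from the $ul$-extension: both lotteries assign probability zero to voter $1$ receiving no approved winner, but at the next value the insincere report places strictly less probability on the ``one approved winner'' outcome, so it is strictly $\spref_1$-preferred to the truthful report, contradicting the definition of strategyproofness. The main obstacle is purely combinatorial bookkeeping: correctly tracking the uniform random choice of which two of $a$'s three supporters are removed in the truthful branch, and verifying that this is the \emph{unique} source of the inferior outcome, so that the final $ul$-comparison is genuinely strict rather than a tie.
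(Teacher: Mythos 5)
Your proposal is correct and takes essentially the same route as the paper's proof: an explicit counterexample in which a voter who truthfully approves both a safely-elected popular candidate and a threshold-marginal one strictly gains under the $ul$-extension by dropping the popular approval, since the truthful ballot risks random removal that strands the marginal candidate and triggers the random second stage (the paper uses $n=6$, $k=3$, $C=\{a,b,c,d\}$ with voter $2$ dropping $b$ from $\{c,b\}$; you use a smaller $n=4$, $k=2$ instance). Your branch accounting checks out --- the bad outcome $\{a,c\}$ arises only via electing $a$ first, removing voter $1$'s ballot (probability $2/3$), and the second stage drawing $c$, for total probability $1/6$ --- so the strict comparison $\pi_{L'}^{(1)}(1)=0<\pi_{L}^{(1)}(1)$ with $\pi_{L'}^{(1)}(0)=\pi_{L}^{(1)}(0)=0$ goes through exactly as in the paper.
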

%
%
%


\subsection{Does the voting rule matter?}


It is well-known that, in elections with more than two voters and a standard voting rule, any election outcome can be supported by a PNE~\cite[Footnote 1]{DeIa05}. Thus a natural questions arises ``Does the voting rule matter?" since we are interested in constructing \emph{strategically robust} voting rules which attain PR in equilibrium. We argue that indeed the voting rule does matter. It has been noted by a number of scholars that considering elections within a costly voting environment is essential to attain predictive power from equilibria analysis. Thus our focus is on costly voting equilibria, and within this environment it is no longer true that, under standard voting rules, any election outcome can be supported by a PNE. Furthermore and as a point of illustration the most commonly applied voting rule for dichotomous preferences, referred to as the AV-rule or simply approval voting,\footnote{see~\cite{BrFi78,Webb78}.} can be shown to fail our definition of \emph{strategic robustness}; that is, there need not be any costly voting equilibria which satisfy PJR. In fact the AV-rule need not have any costly voting equilibrium -- we provide a formal statement and proof of this in the appendix.

\section{Insincere but desirable equilibria: Existence}








We now prove the existence of a PJR outcome supported by a PNE, under the GreedyMonroe voting rule. This PNE, and details within the proof, will be integral to extending the result to the more appropriate costly voting equilibrium concept. It is important to note that the PNE constructed in this theorem have additional desirable features; such as the fact that a voter is never required to report to approve of a candidate which they do not truthfully approve of. 

The intuition of the theorem can be understood as follows. Consider a two-stage process; voters first observe a perfectly accurate pre-election poll which announces the guaranteed winners from a sincere election outcome based on the poll, then voters sequentially remove their votes for any unelected/losing candidate(s). In addition, since the GreedyMonroe voting rule considers each ballot at most once, every voters removes any `excess' votes for a winning candidate\footnote{This type of behaviour whereby voters remove their vote from candidates with excess support to other candidates (or decide to abstain) is analogous to the ``Chicken" effect referred to by~\cite{CoSh96}.} -- thus, restricting their vote to at most one (winning) candidate in election. Lastly, there exists at least one such partition which ensures the actual outcome still coincides with the outcome as per the poll announcement.

\begin{theorem}\label{theorem: k divides n greedy leads to PJR}
Let the voting rule be GreedyMonroe, and assume $k$ divides $n$.  There exists a PNE $\hat{A}$ supporting a PJR outcome with respect to $A$, such that $\hat{A}_i \subseteq A_i$ for all $i\in N$. Furthermore, this PNE coincides with the sincere voting outcome. 
\end{theorem}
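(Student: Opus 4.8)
The plan is to build the equilibrium directly from a single realisation of the \emph{sincere} run of GreedyMonroe. First I would fix one such realisation and record its first-stage data: the set $W_1 = \{c_1, \ldots, c_m\}$ of candidates elected in Step~3, together with the disjoint groups $N_1, \ldots, N_m \subseteq N$ of exactly $n/k$ voters whose ballots were deleted when each $c_j$ was elected (here I use Assumption~\ref{assumption: k divides n}, so $\lceil n/k\rceil = n/k$). By construction every $i \in N_j$ satisfies $c_j \in A_i$. I then define $\hat{A}_i = \{c_j\}$ for $i \in N_j$ and $\hat{A}_i = \emptyset$ for every \emph{leftover} voter (those in no $N_j$). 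This immediately gives $\hat{A}_i \subseteq A_i$ for all $i$.

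Next I would check that $\hat{A}$ reproduces the sincere realisation. Under $\hat{A}$ each $c_j \in W_1$ has exactly $n/k$ supporters (its group $N_j$) and no other candidate receives any support, so the first stage elects precisely $W_1$ and then halts with $C^\ast = \emptyset$; the second stage then draws the remaining $k - m$ seats uniformly from $C\setminus W_1$, exactly as in the fixed sincere realisation (the second stage ignores ballots). Hence $L_{\hat{A}}$ equals the outcome distribution of that realisation, so it ``coincides with the sincere voting outcome'', and since every committee in its support is a possible sincere GreedyMonroe outcome, Proposition~\ref{proposition: GreedyJR JR and not PJR} yields that $L_{\hat{A}}$ satisfies PJR with respect to $A$.

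The substantive part is showing $\hat{A}$ is a PNE. For a leftover voter $i$ (reporting $\emptyset$) I would argue non-pivotality: every unelected candidate has support below $n/k$ and a single unilateral report adds at most one vote, so no new candidate can reach the threshold and every deviation leaves $L_{\hat{A}}$ unchanged (the boundary $n/k=1$ is covered because, by the definition of the realisation, leftover voters approve no unelected candidate). For a first-stage voter $i \in N_j$ the two effects of a deviation are that adding approvals cannot raise the number of elected approved candidates---$c_j$'s group has size exactly $n/k$, so $i$'s ballot is deleted the instant $c_j$ is elected and any further-listed candidate never benefits---while removing or redirecting support for $c_j$ drops it to $n/k-1$, pushing it out of the first stage and into the enlarged random second-stage pool. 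Under $\hat{A}$ voter $i$ obtains every member of $W_1 \cap A_i$ with certainty, so the worst-case number of approved winners is $g := |W_1 \cap A_i|$; after such a deviation $c_j$ is only won probabilistically and the count can fall to $g-1$.

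The crux is to show this deviation is (weakly) worse under the $ul$-extension, and I would do so by exhibiting first-order stochastic dominance, which suffices since the paper records that $sd$-preference implies $ul$-preference. Writing the post-deviation draw as a $(k-m+1)$-subset of $(C\setminus W_1)\cup\{c_j\}$ and the equilibrium draw as a $(k-m)$-subset of $C\setminus W_1$, I would couple them: conditioned on $c_j$ being drawn the two approved counts are equal, while conditioned on $c_j$ not being drawn the extra seat raises the approved count by at most one, exactly offsetting the lost guaranteed seat. This yields $X \ge X'$ pointwise with the correct marginals, so $L_{\hat{A}}$ stochastically dominates the deviation for $i$ and $\hat{A}_i$ is a best response. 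I expect this coupling---verifying that moving a guaranteed winner into the enlarged random pool never helps---to be the main obstacle, together with the bookkeeping that reduces every admissible deviation (including reports of candidates outside $A_i$, and the payoff-equivalent swaps that arise when $n/k=1$) to this canonical ``lose $c_j$, gain one random seat'' form.
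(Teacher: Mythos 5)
Your equilibrium profile is exactly the paper's: fix a realisation of the sincere run, let each first-stage winner keep its dedicated group of $n/k$ voters reporting singleton ballots, and have everyone else report $\emptyset$; the verification that this reproduces the sincere lottery and satisfies PJR is the same in substance (the paper argues PJR directly from $\tilde{W}$ being elected with probability one, you route it through Proposition~\ref{proposition: GreedyJR JR and not PJR} via the observation that every support committee is a positive-probability sincere outcome --- both are fine). Where you genuinely diverge is the best-response step. The paper splits a group voter's deviation by $\ell$, the number of disapproved candidates in $(C\backslash \tilde{W})\cup\{c\}$: for $\ell \ge k-|\tilde{W}|+1$ the worst-case outcome strictly worsens, while for $0<\ell<k-|\tilde{W}|+1$ it computes the hypergeometric worst-case probabilities and shows via a binomial-coefficient ratio that $\mathbb{P}_{L'}^{(i)}(\ell)>\mathbb{P}_{L}^{(i)}(\ell)$, then mixes over the event that $c$ survives line 3. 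You instead exhibit a permutation coupling giving pointwise domination of the approved-winner count (equal when $c_j$ enters the enlarged draw, offset exactly otherwise), i.e.\ first-order stochastic dominance, and invoke the paper's stated fact that $sd$-preference implies $ul$-preference; the coupling is correct, composes properly with the probability-$p_1$ branch in which $c_j$ is still elected in line 3 (that branch is distribution-identical to equilibrium), and is cleaner in that it handles all values of $\ell$, including $\ell=0$, uniformly and without computation. The trade-off is that your argument yields only weak preference --- enough for a PNE, since the $ul$-extension is complete and weak preference excludes a strictly profitable deviation --- whereas the paper's $\ell\ge k-|\tilde{W}|+1$ case establishes a strict gap that it reuses verbatim in Theorem~\ref{theorem: costly voting PJR under greedyMonroe} to show voting strictly beats abstaining. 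One small wobble: your parenthetical for the $n/k=1$ boundary conflates non-pivotality with non-profitability --- there a leftover voter reporting an unelected candidate does put it at the threshold and changes the lottery; what actually saves the equilibrium is that such a voter's true approvals lie inside $\tilde{W}$, so her approved-winner count is unaffected and the deviation is payoff-neutral. Since the paper itself silently assumes $n/k>1$, this corner does not separate the two proofs.
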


\begin{proof}
Let $\tilde{W}=\{c_1, \ldots, c_j\}$ be the set of $j\le k$ candidates elected in the first-stage of the GreedyMonroe rule when voters are sincere. For each candidate $c_t\in \tilde{W}$ let $N_t$ denote the set of $n/k$ voters who unanimously and truthfully supported candidate $c_t$, and were then selected to have their ballots removed (line 3 of the voting rule). 

The reported ballot profile $\hat{A}$ which will be shown to be a PNE is constructed as follows: For each $t\in \{1, \ldots, j\}$, set $\hat{A}_i=\{c_t\}$ for all $i\in N_t$ (note that there are precisely $n/k$ such voters in each $N_t$). For all other voters set $\hat{A}_i=\emptyset$.

At the termination of this procedure we have a reported ballot profile $\hat{A}$ such that all candidates in $\tilde{W}$ receive precisely $n/k$ reported approvals and all other candidates receive zero. Thus, the only pivotable voters who can change the election outcome are those such that $\hat{A}_i=\{c\}$ for some candidate $c$ (since $n/k>1$). Furthermore, for all voters $i\in N$ we have $\hat{A}_i\subseteq A_i$ -- hence voters approve of their reported approval (w.r.t. true preferences $A_i$). 

Also note that the outcome of $\hat{A}$ under GreedyMonroe is a set $W'$ which contains $\tilde{W}$ with probability one (i.e. all candidates in $\tilde{W}$ are necessarily selected). Thus with probability one PJR is satisfied w.r.t. the true preferences $A$. 

It remains to show that this is a PNE. Suppose that a voter $i\in N$ has a (strictly) profitable deviation $\hat{A}_i'\neq \hat{A}_i$, then it must be the case that this voter is pivotable and so $\hat{A}_i=\{c\}$. There are two cases to consider which can arise when $\hat{A}_i'$ is submitted.

Case 1: Nothing happens with probability one. This occurs if $c\in \hat{A}_i'$ and there is no other candidate $c'\in \hat{A}_i'$ and also $c'\in \tilde{W}$ -- thus, $\tilde{W}$ is still elected with probability one, and the distribution/lottery of election outcomes is unchanged. This contradicts the assumption that $\hat{A}'$ is (strictly) profitable, since the voter is indifferent between both ballots $\hat{A}_i$ and $\hat{A}_i'$.

Case 2: Candidate $c$ is removed from the election outcome with non-zero probability in line 3 of the voting rule. This can occur if $c\notin \hat{A}_i'$ and/or there is another candidate $c'\in \hat{A}_i'$ and also $c'\in \tilde{W}$. We now show that this can never be strictly profitable. 

Within Case 2, in the non-zero probability event that candidate $c$ is not elected in line 3 of the voting rule, the rule selects $k-|\tilde{W}|+1$ candidates uniformly at random from the set of unelected candidates $(C\backslash\tilde{W})\cup\{c\}$. Note that alternatively when $c$ is elected in line 3 (or under the equilibrium profile $\hat{A}$), the rule selects just $k-|\tilde{W}|$ candidates uniformly at random from the set of unelected candidates $C\backslash\tilde{W}$. 

To analyse this case we must consider the number of candidates which voter $i$ disapproves of in the set $(C\backslash \tilde{W})\cup\{c\}$, we denote this by
$$\ell=|\big((C\backslash \tilde{W})\cup\{c\}\big) \backslash A_i|.$$
We assume that $\ell>0$, otherwise voter $i$ is indifferent between the lotteries induced by $\hat{A}_i$ and $\hat{A}_i'$ which contradicts the assumption that $\hat{A}_i'$ is strictly profitable.

Now if $\ell\ge k-|\tilde{W}|+1$ then the worst-case (non-zero probability) outcome produced from $\hat{A}_i'$ includes precisely one less (truthfully) approved candidate in the winning committee. This follows since when voter $i$ submits $\hat{A}_i'$ with non-zero probability the candidates in $\tilde{W}\backslash \{c\}$ are elected along with $k-|\tilde{W}|+1$ disapproved candidates -- whilst, under $\hat{A}_i$ the worst-case outcome includes $\tilde{W}$ and just $k-|\tilde{W}|$ disapproved candidates.\footnote{Note that candidate $c\in \tilde{W}$ is truthfully approved by voter $i$, i.e., $c\in A_i$.} Thus, we conclude that voter $i$ strictly prefers $\hat{A}_i$ over $\hat{A}_i'$ (under the $ul$-extension of preferences to lotteries). 

Now suppose that $0<\ell< k-|\tilde{W}|+1$, and again denote the lotteries under $\hat{A}_i$ and $\hat{A}_i'$ by $L$ and $L'$. The worst-case outcomes $W$ and $W'$ induced by $L$ and $L'$, respectively, are equally preferred and are attained under either lottery with non-zero probability when all $\ell$ disapproved candidates from $C\backslash \tilde{W}$ are elected. Denote the number of candidates involved in the random lottery $L$  by $m=|C\backslash \tilde{W}|$, and denote the number of candidates to be elected in this lottery by $j=k-|\tilde{W}|$. The probabilities of these worst-case outcomes being achieved (conditional on candidate $c$ being unelected in lines 3 of the voting rule) is given by the hypergeometric distribution. 

To see this consider the following reformulation; under lottery $L$, the worst-case outcome is when, with the $j$ selection of candidates among $m$ choices, all $\ell$ disapproved candidates are elected. This is analogous to the `urn problem' of calculating the probability that from an urn, with $\ell$ blue balls and $m-\ell$ white balls, $j$ draws (without replacement) leads to all $\ell$ blue balls being drawn. In particular, this probability is given by the hypergeometric distribution (see within~\cite{JoKo77} for details):
\begin{align*}
\mathbb{P}_L^{(i)}(\ell)=\frac{{\ell \choose \ell}{m-\ell \choose j-\ell }}{{m\choose j}}=\frac{{m-\ell \choose j-\ell }}{{m\choose j}}
\end{align*}
whilst, under $L'$ we have $\mathbb{P}_{L'}^{(i)}(\ell)={{m+1-\ell \choose j+1-\ell }}/{{m+1\choose j+1}}$. This follows since the lottery under $L'$, $(j+1)$ draws are made from the set of $(m+1)$ candidates $(C\backslash \tilde{W})\cup\{c\}$ which still only includes $\ell$ disapproved candidates (recall that $c\in A_i$). However, notice that 
\begin{align*}
\mathbb{P}_{L'}^{(i)}(\ell)&=\frac{{m+1-\ell \choose j+1-\ell }}{{m+1\choose j+1}}=\frac{\frac{m+1-\ell}{j+1-\ell}{m-\ell \choose j-\ell }}{\frac{m+1}{j+1}{m\choose j}}=\frac{\frac{m+1-\ell}{j+1-\ell}}{\frac{m+1}{j+1}}\mathbb{P}_L^{(i)}(\ell).
\end{align*}
and for positive $m,j,\ell$ such that $j+1>\ell$ we have
\begin{align*}
\frac{m+1-\ell}{j+1-\ell}\frac{j+1}{m+1}>1\iff (m+1-\ell)(j+1)>(j+1-\ell)(m+1)\iff \ell(m-j)>0,
\end{align*}
which clearly holds since $0<\ell<j+1\le m$. Thus, we have $\mathbb{P}_{L'}^{(i)}(\ell)>\mathbb{P}_L^{(i)}(\ell)$.

Thus, the probability (conditional that $c$ was not elected in line 3 of the voting rule) that the $\ell$ disapproved candidates are elected from the (random quota-filling) draw induced by $L$ is strictly less than that under $L'$.

Now denote the number of approved candidates in the worst-case outcomes $W$ and $W'$ induced by $L$ and $L'$, respectively, by $\alpha$ (note that both $W$ and $W'$ are equally preferred and so have the same number of approved candidates). Under the $ul$-extension, if the worst-case outcomes are equally preferred under both lotteries we compare the (unconditional) probabilities that this worst-case outcome is attained, i.e., $\pi_L^{(i)}(\alpha)$ and $\pi_{L'}^{(i)}(\alpha)$. Under the lottery $L'$, there are two non-zero probability events; (1) candidate $c$ is elected in lines 3 of voting rule and then $k-|\tilde{W}|$ candidates are elected from $C\backslash \tilde{W}$, or (2)  candidate $c$ is not elected in lines 3 of the voting rule and then $k-|\tilde{W}|+1$ candidates are elected from $(C\backslash \tilde{W})\cup\{c\}$. Denote the probabilities of these events by $p_1$ and $p_2$, respectively, note that $0<p_1<1$ and $p_2=1-p_1$. 

For lottery $L$', conditional of event (1), the probability that the worst-case outcome is achieved is given by $\mathbb{P}_L^{(i)}(\ell)$ (i.e. the same as under lottery $L$) and conditional of event (2), the probability that the worst-case outcome is achieved is given by $\mathbb{P}_{L'}^{(i)}(\ell)$. Note that the probability that the worst-case outcome is attained under $L$ is simply $\mathbb{P}_L^{(i)}(\ell)$, since candidate $c$ is elected at line 3 of the voting rule with probability one. Thus, we have
\begin{align*}
\pi_{L'}^{(i)}(\alpha)&=p_1 \mathbb{P}_L^{(i)}(\ell)+p_2 \mathbb{P}_{L'}^{(i)}(\ell)\\
&>p_1 \mathbb{P}_L^{(i)}(\ell)+p_2 \mathbb{P}_{L}^{(i)}(\ell) &&\text{since } \mathbb{P}_{L'}^{(i)}(\ell)>\mathbb{P}_{L}^{(i)}(\ell)\\
\implies  \pi_{L'}^{(i)}(\alpha)&>\pi_{L}^{(i)}(\alpha).
\end{align*}
That is, the worst-case outcome (which is equivalent under $L$ and $L'$) is strictly more likely under $L'$. We conclude that voter $i$ strictly prefers the lottery $L$ over $L'$, and hence we have a contradiction since $\hat{A}_i'$ can not be a strictly profitable deviation.
Thus, the reported ballot profile $\hat{A}$ is a PNE and the election outcome $W$ satisfies PJR with respect to the true preferences.
\end{proof}

%


Turning our focus to the costly voting equilibrium concept, we identify a sufficient condition to ensure that the PJR equilibrium outcome from the previous theorem which coincides with the sincere outcome is also a costly voting equilibrium (Theorem~\ref{theorem: costly voting PJR under greedyMonroe}). The sufficiency condition (Assumption~\ref{assumption: sufficiency}) is then shown to be `tight' in the sense that in general without the sufficiency condition the sincere voting outcome need not be possible to sustain in a costly voting environment (Theorem~\ref{theorem: tightness of sufficiency condition}).


For future reference, we formally define this `sufficiency condition' as an assumption but it will be made explicit in the statements of results whether this condition is indeed required. The condition requires that every candidate (truthfully) disapproves of at least $k$ candidates. Informally, this condition ensures that there are enough `disapproved' candidates to incentivise voters to vote according to a costly voting equilibrium, and avoid the random quota-filling feature of the voting rule. 

\begin{assumption}\emph{[Sufficiency condition]}\label{assumption: sufficiency}
We say that the `sufficiency condition' holds if every voter disapproves of at least $k$ candidates in $C$ with respect to their true preferences; that is, 
$$|C\backslash A_i|\ge k \qquad \text{for all voters $i\in N$.}$$
\end{assumption}

\begin{remark}\label{remark: suff}
Under a party-seat reformulation of the election model whereby $p$ political parties (each with at least $k$ candidates) contest for seats and voters cast votes based on a candidate's party, the sufficiency condition simply requires that each voter disapproves of at least one party.
\end{remark}




With this sufficiency condition (Assumption~\ref{assumption: sufficiency}), the result of Theorem~\ref{theorem: k divides n greedy leads to PJR} can be maintained in a costly voting environment. That is, a PJR outcome can be supported by a costly voting equilibrium. This is achieved by ensuring that every non-abstaining voter is pivotable and strictly prefers to participate than to abstain.

\begin{theorem}\label{theorem: costly voting PJR under greedyMonroe}
Let the voting rule be GreedyMonroe, and assume $k$ divides $n$. If every voter $i\in N$ disapproves of at least $k$ candidates in $C$ (i.e. Assumption~\ref{assumption: sufficiency} holds) then there exists a costly voting equilibrium $\hat{A}$ supporting a PJR outcome such that $A_i\subseteq \hat{A}_i$ for all $i\in N$. Furthermore, this costly voting equilibrium coincides with a sincere voting outcome under the GreedyMonroe rule. 
\end{theorem}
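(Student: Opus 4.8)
The plan is to reuse, verbatim, the pure-Nash equilibrium $\hat{A}$ constructed in Theorem~\ref{theorem: k divides n greedy leads to PJR} — for each first-stage winner $c_t\in\tilde{W}=\{c_1,\ldots,c_j\}$ let the $n/k$ voters of $N_t$ report $\hat{A}_i=\{c_t\}$ and let every remaining voter abstain ($\hat{A}_i=\emptyset$) — and to show this profile survives as a costly voting equilibrium. Since a costly voting equilibrium is a PNE in which only voters who are indifferent between their ballot and abstention are driven to $\emptyset$, and we already know $\hat{A}$ is a PNE whose outcome coincides with the sincere first-stage outcome and satisfies PJR, it suffices to establish two facts: (i) each abstaining voter is genuinely indifferent among all of its ballots, so that the small cost makes $\emptyset$ optimal; and (ii) each participating voter of some $N_t$ strictly prefers $\{c_t\}$ to $\emptyset$, so that the small cost does not tempt it to drop out. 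Note this construction never asks a voter to approve a disapproved candidate, the desirable feature emphasised earlier.

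Fact (i) is immediate from the structure already recorded in the proof of Theorem~\ref{theorem: k divides n greedy leads to PJR}: under $\hat{A}$ every $c_t$ has exactly $n/k$ reported approvals and every other candidate has none, so (using $n/k>1$) an abstaining voter can neither push a losing candidate to the $n/k$ threshold nor alter the fate of a winner. Every ballot therefore induces the same lottery, abstention is a best response, and carrying no cost it is strictly optimal under costly voting.

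The substance is Fact (ii). I would fix $i\in N_t$ and compare the lottery $L$ induced by $\{c_t\}$ against the lottery $L''$ induced by $\emptyset$. The key structural observation is that the sets $N_1,\ldots,N_j$ are disjoint and each reported ballot is a singleton: when $i$ abstains, $c_t$ drops to $n/k-1$ approvals and can never be replenished in the first stage, since removing ballots for the other winners touches no voter in $N_t$. Hence $c_t$ is never elected in the first stage, and under $L''$ the rule elects $\tilde{W}\setminus\{c_t\}$ with certainty and then draws the remaining $k-j+1$ seats uniformly from $(C\setminus\tilde{W})\cup\{c_t\}$, whereas under $L$ it elects all of $\tilde{W}$ and draws only $k-j$ seats from $C\setminus\tilde{W}$. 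Writing $\ell=|((C\setminus\tilde{W})\cup\{c_t\})\setminus A_i|=|(C\setminus\tilde{W})\setminus A_i|$ for the number of disapproved candidates in the enlarged pool (using $c_t\in A_i$), the sufficiency bound $|C\setminus A_i|\ge k$ together with $c_t\in\tilde{W}\cap A_i$ — so at most $j-1$ of $i$'s disapproved candidates lie in $\tilde{W}$ — forces $\ell\ge k-(j-1)=k-j+1$. Since $k-j+1$ is exactly the number of seats drawn under $L''$, with positive probability all drawn seats are disapproved and $c_t$ is not drawn; this worst-case outcome of $L''$ has $|\tilde{W}\cap A_i|-1$ approved candidates, one fewer than the probability-one worst case $|\tilde{W}\cap A_i|$ of $L$. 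The $ul$-extension then gives $L\spref_i L''$, since the least preferred attainable outcome is strictly better under $L$.

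This places us in precisely the clean branch $\ell\ge k-|\tilde{W}|+1$ already isolated in Theorem~\ref{theorem: k divides n greedy leads to PJR}, so no hypergeometric estimate is needed. With every $N_t$ voter strictly preferring participation and every other voter indifferent, a sufficiently small cost leaves $\hat{A}$ a best-response profile; it is therefore a costly voting equilibrium whose outcome still coincides with the sincere first-stage outcome $\tilde{W}$ and satisfies PJR. I expect the main obstacle to be the structural/counting step: verifying that abstention permanently drives $c_t$ below threshold (via disjointness of the $N_t$) and, more delicately, that the sufficiency condition yields the sharp bound $\ell\ge k-j+1$ rather than merely $\ell>0$ — it is this bound, not positivity alone, that lets the preference comparison collapse to the easy worst-case branch.
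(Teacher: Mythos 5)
Your proposal is correct and takes essentially the same route as the paper's own proof: reuse the PNE of Theorem~\ref{theorem: k divides n greedy leads to PJR} verbatim and show, via the sufficiency condition, that abstention by a non-abstaining voter lands exactly in the branch $\ell\ge k-|\tilde{W}|+1$, so participation is strictly preferred under the $ul$-extension. Your explicit checks --- that abstention permanently drops $c_t$ below the $n/k$ threshold (by disjointness of the $N_t$ and singleton ballots) and the count $\ell\ge k-j+1$ (since $c_t\in A_i\cap\tilde{W}$ leaves at most $j-1$ disapproved candidates in $\tilde{W}$) --- merely spell out steps the paper asserts without detail.
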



\begin{proof}
Let $\hat{A}$ be the PNE equilibrium constructed in Theorem~\ref{theorem: k divides n greedy leads to PJR}. Let $\tilde{W}$ be the associated subset of candidates elected in the first-stage of GreedyMonroe under $\hat{A}$. 

Given that $\hat{A}$ is a PNE, it only remains to prove that this is indeed a costly voting equilibrium. That is, it suffices to show that abstaining is never weakly preferred by a non-abstaining voter. Let voter $i\in N$ be such that $\hat{A}_i=\{c\}$, if voter $i$ were to abstain from voting then the worst-case (non-zero probability) election outcome would exclude the (truthfully) approved candidate $c$ from the election outcome and include $\tilde{W}\backslash \{c\}$ and $k+1-|\tilde{W}|$ disapproved candidates from $C\backslash \tilde{W}$ (such a number of disapproved candidates in $C\backslash \tilde{W}$ is guaranteed to exist by Assumption~\ref{assumption: sufficiency}. Whilst when voter $i$ does not abstain and reports $\hat{A}_i=\{c\}$ the worst-case election outcome includes $\tilde{W}$ and just $k-|\tilde{W}|$ disapproved candidates -- this means that precisely one additional approved candidate is included in the worst-case outcome compared to the worst-case outcome when she abstains.\footnote{Note that this corresponds precisely to the case where $\ell\ge k-|\tilde{W}|+1$ within the proof of Theorem~\ref{theorem: k divides n greedy leads to PJR}.} Thus, the voter strictly prefers to vote than to abstain and the ballot profile is a costly voting equilibrium.

The final statement follows immediately since the costly voting equilibrium $\hat{A}$ is precisely the same as the PNE in Theorem~\ref{theorem: k divides n greedy leads to PJR} which supports a PJR outcome which coincides with the sincere voting outcome.
\end{proof}

In general however, the above theorem does not hold. That is, the sufficient condition is `tight'.

\begin{theorem}\label{theorem: tightness of sufficiency condition}
Consider the GreedyMonroe voting rule and suppose `$k$ divides $n$'. If every voter $i\in N$ does not disapprove of at least $k$ candidates in $C$ (i.e., Assumption~\ref{assumption: sufficiency} does not hold) then, in general, a PJR outcome which coincides with the sincere outcome can not be supported as a costly voting equilibrium. That is, the sufficiency condition in Theorem~\ref{theorem: costly voting PJR under greedyMonroe} is `tight'.
\end{theorem}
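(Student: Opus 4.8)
The plan is to prove tightness by exhibiting a single instance in which Assumption~\ref{assumption: sufficiency} fails and the targeted sincere PJR outcome provably cannot arise from any costly voting equilibrium. First I would fix $k=2$ and $n=4$ (so $k$ divides $n$ and $n/k=2$), a candidate set $C=\{a,b,c\}$, and the true profile in which voters $1,2$ approve only $\{a\}$ and voters $3,4$ approve all of $C$. Here voters $3,4$ violate Assumption~\ref{assumption: sufficiency} since $|C\setminus A_i|=0<k$, while the cohesive pair $\{1,2\}$ forces $a\in W$ under PJR (Definition~\ref{definition: PJR}), so the committee $\{a,b\}$ satisfies PJR. A direct run of Algorithm~\ref{algo:Greedy Monroe} under sincere ballots, fixing the tie-breaks so that $b$ is elected first (deleting the ballots of voters $3,4$) and then $a$, yields $\tilde{W}=\{a,b\}$; this is the `sincere outcome' that Theorem~\ref{theorem: costly voting PJR under greedyMonroe} would, under Assumption~\ref{assumption: sufficiency}, reproduce as a costly voting equilibrium, and it is exactly what I will show cannot be reproduced here.

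The engine of the argument is an observation about the Assumption-violating voters. Since voters $3,4$ approve every candidate, every size-$k$ committee $W$ gives $|W\cap A_i|=k$, so they are indifferent between all election outcomes and hence between all induced lotteries. Consequently $L_{(\emptyset,\hat{A}_{-i})}\pref_i L_{(\hat{A}_i',\hat{A}_{-i})}$ holds for every $\hat{A}_i'$ and every $\hat{A}_{-i}$, so by the definition of costly voting these two voters abstain in \emph{every} costly voting equilibrium, whatever the others do. I would then combine this with a counting argument on line 3 of the voting rule: with voters $3,4$ abstaining, at most the $n/k=2$ active voters $1,2$ submit non-empty ballots, yet electing any first-stage candidate deletes $\ceil{n/k}=2$ ballots, i.e.\ both active ballots at once. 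Hence at most one candidate can ever be elected in the first stage, and the remaining seat must be filled by the uniform random draw of line 4 over at least two remaining candidates. Crucially this count is independent of which (possibly insincere) ballots voters $1,2$ report, so every costly voting equilibrium necessarily induces a non-degenerate lottery over committees.

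From here the conclusion is immediate: a non-degenerate lottery can never equal the deterministic committee $\{a,b\}$, so the sincere PJR outcome is not supported by any costly voting equilibrium, even though a costly voting equilibrium does exist (voters $1,2$ each reporting $\{a\}$ and voters $3,4$ abstaining, which yields the lottery placing equal weight on $\{a,b\}$ and $\{a,c\}$). Because Assumption~\ref{assumption: sufficiency} is violated precisely by voters $3,4$, this establishes that the sufficiency condition in Theorem~\ref{theorem: costly voting PJR under greedyMonroe} is tight. I expect the main obstacle to be the second paragraph: one must argue over \emph{all} costly voting equilibria rather than only the profile constructed in Theorem~\ref{theorem: k divides n greedy leads to PJR}, which requires both that the indifferent voters provably abstain in every such equilibrium and that the resulting shortage of active ballots forces randomisation irrespective of the active voters' (possibly strategic) reports. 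The ballot-deletion count makes this quantitative step clean, but phrasing it so that it covers insincere reports is the part that needs care.
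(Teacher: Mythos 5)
There is a genuine gap: your chosen instance does not actually witness the theorem. In your profile ($A_1=A_2=\{a\}$, $A_3=A_4=\{a,b,c\}$) the profile $\hat{A}=(\{a\},\{a\},\emptyset,\emptyset)$ \emph{is} a costly voting equilibrium (voters $3,4$ are indifferent and abstain; voters $1,2$ strictly prefer voting, since abstaining drops $s(a)$ below $n/k=2$ and exposes them to the line-4 lottery in which $a$ is omitted with probability $1/3$), and it induces the lottery putting probability $1/2$ on each of $\{a,b\}$ and $\{a,c\}$. But if you trace the sincere run of Algorithm~\ref{algo:Greedy Monroe} over \emph{all} tie-break realisations (rather than fixing one), the sincere outcome is exactly this same lottery: $a$ is elected with probability one and the second seat is uniform over $\{b,c\}$. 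So in your instance a costly voting equilibrium supports a PJR outcome coinciding with the sincere outcome, which is the opposite of what the theorem requires the instance to show. The root cause is your reading of ``the sincere outcome'' as the single deterministic committee $\{a,b\}$ obtained by fixing tie-breaks; GreedyMonroe is randomised, so the sincere outcome is a lottery, and Theorem~\ref{theorem: costly voting PJR under greedyMonroe} (via Theorem~\ref{theorem: k divides n greedy leads to PJR}) only ever promises an equilibrium lottery containing the first-stage sincere winners $\tilde{W}$ with probability one. Your non-degeneracy argument (``every equilibrium lottery is non-degenerate, hence cannot equal $\{a,b\}$'') therefore refutes a claim the theorem never makes, and proves nothing about whether the sincere \emph{lottery} is sustainable.

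The paper's proof uses a structurally different instance that avoids this trap: $A_1=\{a,b,c\}$, $A_2=\{a\}$, $A_3=A_4=\emptyset$ with $k=2$, $n=4$. There the Assumption-violating voter (voter 1, who approves everything) is an \emph{essential} part of candidate $a$'s quota: under sincere voting $s(a)=2=n/k$ and $a$ is guaranteed, but in any costly voting equilibrium voters $1,3,4$ abstain by indifference, leaving voter $2$ unable to reach the threshold alone, hence also abstaining; the unique equilibrium is all-abstain, whose uniform lottery elects $a$ only with probability $2/3$. The lesson for repairing your argument: it is not enough that some voters violate Assumption~\ref{assumption: sufficiency} --- their forced abstention must destroy the threshold support of a candidate guaranteed under sincere voting. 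In your instance the abstentions are harmless because the two sincere $\{a\}$-voters meet the quota by themselves; in the paper's instance the indifferent voter's support is irreplaceable, which is what makes the sincere outcome unattainable in equilibrium.
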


\begin{proof}
Consider the following example; let $|N|=4, k=2, C=\{a,b,c\}$ and the preferences of voters be as follows:
\begin{align*}
A_1=\{a,b,c\}, A_2=\{a\},A_3= A_4=\emptyset.
\end{align*}
Under GreedyMonroe with sincere voting the election outcome $W$ includes candidate $a$ with probability one. Whilst, the only costly voting equilibrium is $\hat{A}_i=\emptyset$ for all $i\in N$ which leads to an election outcome which need not include candidate $a$. To see this observe that in every costly voting equilibrium it must be the case that $\hat{A}_1=\hat{A}_3=\hat{A}_4=\emptyset$, since voters 1, 3, and 4, are indifferent between all election outcomes and hence indifferent between all lotteries. But $n/k=2>1$, and so voter $2$ cannot ensure the election of candidate $a$ and hence will also abstain. Thus, the sincere election outcome need not be attainable as a costly voting equilibrium.
\end{proof}

We now consider the equilibria of the GreedyMonroe rule when a stronger equilibrium concept is applied. Under the Nash equilibrium (and also the costly voting equilibrium) concept, an equilibrium occurs when no voter can strictly profit from a unilateral deviation. This however, does not capture the possibility that voters may engage in coalitions to gain strategic advantage. To capture this idea we consider the notion of \emph{strong Nash equilibria} (introduced in~\cite{Auma59}) and the costly voting version \emph{strong costly voting equilibria}. Both concepts are formally defined below.

Recall that given a ballot profile $\hat{A}$ and voting rule, we denote the lottery of election outcomes produced by the voting rule by $L_{\hat{A}}$. Furthermore, given a set of voters $S\subseteq N$ we denote the reported ballots of voters in $S$ by $\hat{A}_S$; that is, $\hat{A}_{S}=(\hat{A}_{i})_{i\in S}$.

\begin{definition}\emph{[Strong (pure) Nash equilibrium (strong PNE)]}
A reported ballot profile $\hat{A}$ is a strong PNE if for every $S\subseteq N$, there does not exist any set of ballot profiles $\hat{A}_{S}'=(\hat{A}_{i}')_{i\in S}$ such that
\begin{align*}
L_{(\hat{A}_{S}', \hat{A}_{N\backslash S})}&\spref_i L_{\hat{A}}&&\text{for all $i\in S$.}
\end{align*}
\end{definition}

\begin{definition}\emph{[Strong costly voting equilibrium]}
A reported ballot profile $\hat{A}$ is a strong costly voting equilibrium if it is a strong PNE and also a costly voting equilibrium. 
\end{definition}

\begin{remark}
A strong PNE is weakly Pareto efficient~\cite{Auma59}. This is also true for strong costly voting equilibria.
\end{remark}

Considering the strong costly voting equilibrium reduces the number of equilibria admitted by the GreedyMonroe voting rule. For example, the reported profile $\hat{A}=\emptyset$, i.e. where all voters abstain, is a costly voting equilibrium for all preference profiles since $n/k>1$ and so there is no profitable unilateral deviation for any voter. However, under the strong costly voting equilibrium this is no longer the case since coalitions of sizes at least $n/k$ may be able to find strictly profitable deviations. 

The following result shows that every strong costly voting equilibrium is desirable in terms of proportional representation. That is, there are no equilibria which do not satisfy PJR. This is stated and proven in the theorem below. We note however, that the result does not guarantee the existence of such an equilibrium for every preference profile.

The intuition for the result is as follows: If there was a strong costly voting equilibrium which did not satisfy PJR then there would be a group of voters $X$ with $|X|\ge \ell n/k$ (for some positive integer $\ell$) who are represented by strictly less than $\ell$ candidates in $\cup_{i\in X} A_i$, and also unanimously support $\ge \ell$ candidates (i.e. those in $\cap_{i\in X} A_i$). But, under the GreedyMonroe rule, this group can ensure at least $\ell$ candidates in $\cap_{i\in X} A_i$ are elected since they are a group of size $\ge \ell n/k$, and so as a group there is a deviation which strictly benefits every voter in $X$.

%
%
%

\begin{theorem}
Let the voting rule be GreedyMonroe, and assume $k$ divides $n$. Every strong costly voting equilibrium satisfies PJR.
\end{theorem}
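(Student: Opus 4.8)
The plan is to argue by contradiction. Suppose $\hat{A}$ is a strong costly voting equilibrium whose induced lottery $L_{\hat{A}}$ fails PJR with respect to the true profile $A$. By the lottery version of Definition~\ref{definition: PJR}, some outcome $W_0$ in the support of $L_{\hat{A}}$ violates PJR, so there is a positive integer $\ell\le k$ and a group $X\subseteq N$ with $|X|\ge \ell n/k$ and $|\cap_{i\in X}A_i|\ge \ell$, yet $|W_0\cap(\cup_{i\in X}A_i)|<\ell$. Fixing $\ell$ candidates $d_1,\dots,d_\ell\in\cap_{i\in X}A_i$ on which the group unanimously agrees, I would exhibit a coalitional deviation that strictly benefits every member of a subgroup of $X$, contradicting the strong-Nash property that a strong costly voting equilibrium inherits by definition. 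Note that no appeal to the costly-voting refinement is needed for this deviation, since the strong-PNE condition already forbids any coalitional strictly-improving deviation.

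For the deviation, partition $\ell n/k$ of the voters in $X$ into $\ell$ disjoint blocks $S_1,\dots,S_\ell$, each of size $n/k$, and have every voter in $S_t$ report the singleton $\{d_t\}$ while all voters outside $S:=\cup_t S_t$ keep their ballots fixed. The crux of the argument, and the step I expect to be the main obstacle, is to show that under this deviation all of $d_1,\dots,d_\ell$ are elected in the first stage of GreedyMonroe with probability one, despite the random tie-breaking. I would use a ballot-counting argument exploiting two features of the rule. First, each block $S_t$ contributes $n/k$ dedicated approvals for $d_t$ that approve no other candidate, so (since $k$ divides $n$ and $\ceil{n/k}=n/k$) these ballots are never removed in line~3 until $d_t$ itself is elected, keeping $s(d_t,\cdot)\ge n/k$ throughout. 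Second, any elected candidate other than the $d_t$ consumes $n/k$ ballots drawn entirely from the $(k-\ell)n/k$ non-dedicated ballots, so at most $k-\ell$ such candidates are ever elected in the first stage. Since every un-elected $d_t$ remains in $C^*$, the first stage never halts via $C^*=\emptyset$ while some $d_t$ is un-elected; it can only halt at $|\tilde{W}|=k$, and then at least $\ell$ of the chosen candidates must be $d_t$'s. Either way all of $d_1,\dots,d_\ell$ are elected.

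It then remains to compare the lotteries under the $ul$-extension. The deviation lottery $L'$ places all $d_1,\dots,d_\ell\in A_i$ in the committee with probability one for every $i\in X$, so $\pi_{L'}^{(i)}(j)=0$ for every $j<\ell$. On the other hand $W_0$ witnesses $|W_0\cap A_i|\le|W_0\cap(\cup_{i\in X}A_i)|<\ell$, so $L_{\hat{A}}$ assigns positive probability to some value below $\ell$; letting $j_0<\ell$ be the least such value, the two lotteries agree (both zero) on all $j<j_0$ while $\pi_{L'}^{(i)}(j_0)=0<\pi_{L_{\hat{A}}}^{(i)}(j_0)$, which is exactly $L'\spref_i L_{\hat{A}}$. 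This holds for every $i\in X\supseteq S$, so the coalition $S$ strictly gains, contradicting the assumption that $\hat{A}$ is a strong PNE and hence a strong costly voting equilibrium. Therefore every strong costly voting equilibrium satisfies PJR.
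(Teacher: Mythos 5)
Your proof is correct and follows essentially the same route as the paper's: assume a strong costly voting equilibrium fails PJR, exhibit a coalitional deviation by the cohesive group $X$ that elects $\ell$ candidates of $\cap_{i\in X}A_i$ in the first stage with probability one, and conclude via the $ul$-extension that the worst case strictly improves for every coalition member, contradicting the strong-PNE property. The only (harmless) differences are cosmetic: the paper has all of $X$ report the full intersection $\cap_{i\in X}A_i$ whereas you partition $\ell n/k$ voters into singleton-reporting blocks, and your ballot-counting argument makes explicit the probability-one election claim that the paper asserts more briefly.
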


\begin{proof}
Suppose for the purpose of a contradiction that $\hat{A}$ is a strong costly voting equilibrium, let $L$ denote the lottery over election outcome under $\hat{A}$, and let $W$ be a some realised outcome.

For the purpose of a contradiction suppose that $\hat{A}$ does not satisfy PJR; that is, there exists a realised outcome, say $W$, which does not satisfy PJR. Thus, there exists a positive integer $\ell$ and a group of voter $X$ with $|X|\ge \ell n/k$ and $|\cap_{i\in X}A_i|\ge \ell$ such that 
$$|W\cap(\cup_{i\in X} A_i)|<\ell.$$
It follows that for all $i\in X$, $|W\cap A_i|<\ell$.

Now consider the deviation such that $\hat{A}_i'=\cap_{i\in X} A_i$ for all $i\in X$. Under this deviation, when all other voters strategies are unchanged,\footnote{In fact this argument holds even if we allow other coalitions to simultaneously deviate} the GreedyMonroe elects at least $\ell$ candidates from $\cap_{i\in X} A_i$ with probability one. To see this note that there are $\ge \ell n/k$ voters unanimously supporting $\ge \ell$ candidates, thus the first-stage of GreedyMonroe can not terminate without $\ge \ell$ candidates in $\cap_{i\in X} A_i$ being elected.

Thus, if $L'$ denotes the lottery under the deviation by group $X$ then every election outcome $W'$ is such that we have $|W'\cap A_i|\ge \ell$ for all $i\in X$. It follows immediately that under the $ul$-extension of preferences every voter strictly prefers the lottery $L'$ over $L$ (the worst-case outcome under $L'$ is strictly better than that under $L$). Thus, every voter in the group $X$ is strictly better off from the deviation, and so $\hat{A}$ cannot be a strong costly voting equilibrium which is a contradiction.
\end{proof}

\subsection{An impossibility result in the implementation of proportional representation}


The previous sections were focused on the existence of robust voting rules which admit at least one equilibrium which satisfy proportional representation. A natural question is whether all equilibria of the voting rule satisfy proportional representation and, if not, does there exist such a voting rule? Within the literature of implementation theory, if such a voting rule were to exist it would be said to \emph{fully implement} proportional representation. A voting rule which fully implements proportional representation is indeed desirable since, without the ability to coordinate voter beliefs about others' actions, there is no guarantee that the desirable equilibria constructed in the previous section will actually be attained. Unfortunately, if there exists such a deterministic voting rule then it must be the case that there exists instances such that a costly voting equilibrium need not exist (in such cases the voting rule may be considered to be `degenerate'). 

We begin by defining an arbitrary voting rule and the relevant parameters; Let $M$ be a finite message space which voters take actions in, and given a positive integer $k$, denote the space of all candidate subsets of $C$ with size $k$ by $\mathcal{W}_k$. Let $\Delta(\mathcal{W}_k)$ denote the set of all lotteries over elements of $\mathcal{W}_k$. Note that the message space $M$ is an arbitrary action space and is not restricted to approval ballots. Given $k$ and $n$ (the pre-determined size of the election outcome and number of voters), a voting rule is a mapping of a sequence of $n$ elements of $M$ to an element of $\Delta(\mathcal{W}_k)$. That is, $V_{(n,k)}: \prod_{i=1}^n M\rightarrow \Delta(\mathcal{W}_k)$.

If a voting rule treats all voters equally then the voting rule is considered to \emph{anonymous}. More formally, a voting rule $V_{(n,k)}$ is anonymous if given any input of `votes', say $(m_1, \ldots, m_m)$, any reordering (or permutation) of the inputs leads to the same outcome, i.e.,  $V_{(n,k)}(m_1, \ldots, m_n)=V_{(n,k)}(m_{\sigma(1)}, \ldots, m_{\sigma(n)})$ for all permutations $\sigma: [n]\rightarrow [n]$.
%

We now prove an impossibility results which shows that there is no resolute, anonymous and deterministic voting rule which fully implements PJR and also guarantees the existence of costly voting equilibria.

The intuition for the proof is as follows; if a voting rule guarantees the existence of equilibria and fully implements PJR then there must exist an equilibrium for every preference profile -- even in the case where PJR is trivially satisfied. Under a deterministic voting rule, when electing two candidates out of three, there is one candidate, say $c$, who is necessarily excluded from the equilibrium $\hat{m}$ election outcome under preference profile $A$. We then consider an alternate but similar preference profile, $A'$ which requires that candidate $c$ is elected for PJR. Thus, if the voting rule fully implements PJR there must exist a deviation from $\hat{m}$ under the preference $A'$. The similarity between the two preference profiles $A$ and $A'$ considered shows that this deviation must have in fact been an available and profitable deviation from $\hat{m}$ in the original equilibrium under profile preferences $A$ -- this contradicts the original equilibrium. 

\begin{theorem}\emph{[Impossibility of (deterministic) PJR implementation]}
Any resolute, anonymous, and deterministic voting rule which always admits at least one costly voting equilibrium, does not fully implement PJR. 
\end{theorem}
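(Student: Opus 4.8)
The plan is to argue by contradiction in the smallest instance that already exhibits the obstruction: fix $n=4$, $k=2$, and $C=\{a,b,c\}$, so that $k$ divides $n$ and $k<|C|$, and treat abstention as a designated null message $\emptyset\in M$. Suppose some $V_{(4,2)}$ is resolute, anonymous, and deterministic, always admits a costly voting equilibrium, and fully implements PJR. I would start from a base profile $A$ on which PJR is vacuous: let voters $1,2,3$ truthfully approve $\{a\}$, $\{b\}$, $\{c\}$ respectively and let voter $4$ approve $\emptyset$. Since every candidate has exactly one supporter while the threshold is $n/k=2$, no cohesive group meets the PJR requirement, so every size-$2$ committee satisfies PJR. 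By the existence hypothesis there is a costly voting equilibrium $\hat m$ under $A$; resoluteness and determinism make $V(\hat m)$ a single committee of size $2$, which therefore excludes exactly one candidate, and after relabelling I take this candidate to be $c$.

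The first substantive step is to pin down the equilibrium messages of two voters. Voter $4$ is indifferent among all committees, so costly voting forces $\hat m_4=\emptyset$. For voter $3$, whose only approved candidate is $c$: because $c\notin V(\hat m)$ and $\hat m$ is a pure Nash equilibrium, voter $3$ cannot force $c$ into the outcome by any unilateral report, so every available report (abstention included) yields utility $0$; costly voting then forces $\hat m_3=\emptyset$ as well. Thus voters $3$ and $4$ submit the \emph{identical} null ballot in $\hat m$, and it is precisely this common-message property that the transfer step will exploit through anonymity.

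Next I would perturb only voter $4$'s preference, setting $A_4'=\{c\}$ and leaving every other true preference unchanged. Now $\{3,4\}$ is a cohesive group of size $n/k=2$ unanimously approving $\{c\}$, so every PJR committee under $A'$ must contain $c$; since $c\notin V(\hat m)$, the profile $\hat m$ violates PJR under $A'$. Full implementation then forbids $\hat m$ from being a costly voting equilibrium under $A'$. The costly voting condition is nonetheless still met (voter $4$, the only voter whose preference changed, abstains, and each unchanged voter faces exactly the same comparisons as under $A$ because $\hat m_{-i}$ is unchanged), so the failure must be a Nash violation: some voter has a strictly profitable deviation under $A'$. No voter with unchanged preferences can have one, since the induced outcome depends only on the submitted messages and any such deviation would already have been profitable under $A$, contradicting that $\hat m$ is a PNE there. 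Hence voter $4$ has a profitable report $\hat m_4'$, which, as voter $4$ now approves only $c$, forces $c\in V(\hat m_4',\hat m_{-4})$.

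Finally I would transfer this deviation back to the base profile. Because $\hat m_3=\hat m_4=\emptyset$, the profiles $(\hat m_4',\hat m_{-4})$ and $(\hat m_4',\hat m_{-3})$ induce the same multiset of ballots, so anonymity gives $V(\hat m_4',\hat m_{-3})=V(\hat m_4',\hat m_{-4})$, which contains $c$. But voter $3$ truthfully approves $c$ and receives utility $0$ under $\hat m$, so submitting $\hat m_4'$ is a strictly profitable deviation for voter $3$ at the base profile $A$, contradicting that $\hat m$ is an equilibrium under $A$. I expect the main obstacle to be the second step: securing the joint-abstention (common-message) property, since the whole anonymity transfer rests on the switched voter and an unchanged ``witness'' voter playing the identical ballot in the base equilibrium, and this has to be extracted purely from the costly-voting logic that a voter unable to guarantee its unique approved candidate is driven to abstain. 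The relabelling is harmless because the base profile is symmetric across $a,b,c$: whichever candidate the equilibrium excludes, its single-issue supporter plays the role of voter $3$ and voter $4$ is switched to approve that candidate.
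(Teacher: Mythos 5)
Your proposal is correct and follows essentially the same route as the paper's own proof: the same $n=4$, $k=2$, $C=\{a,b,c\}$ instance, the same deduction that voters $3$ and $4$ abstain in the base equilibrium, the same perturbation $A_4'=\{c\}$, and the same anonymity swap transferring voter $4$'s profitable deviation to voter $3$. Your two extra care points -- explicitly ruling out a costly-voting (rather than Nash) failure under $A'$, and spelling out why voter $3$'s inability to force $c$ drives her to abstain -- are just more detailed versions of steps the paper states tersely.
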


\begin{proof}
Let $|N|=4, k=2, C=\{a,b,c\}$. Suppose for the purpose of a contradiction that $V_{(n,k)}$ is a resolute, anonymous, and deterministic voting rule which fully implements PJR in costly voting equilibria, and always admits at least one equilibrium for every preference profile. 

Let the preference profile $A$ be 
\begin{align*}
A_1=\{a\}, A_2=\{b\}, A_3=\{c\},A_4=\emptyset.
\end{align*}
By assumption there exists a costly voting equilibrium, say $\hat{m}\in \prod_{i=1}^4 M$, due to symmetry and without loss of generality assume that the election outcome is $W=\{a,b\}$; that is,  
\begin{align}\label{impossibiltiyequation}
V_{(n,k)}(\hat{m}_1, \hat{m}_2, \hat{m}_3, \hat{m}_4)=\{a,b\}.
\end{align}
Note that this trivially satisfies PJR. Furthermore, given the preference profile, a necessary condition for a costly voting equilibrium to support the outcome $W$ is that $\hat{m}_3=\hat{m}_4=\emptyset.$

This follows because the outcome $W$ is voter $3$'s least preferred outcome, and voter 4 is indifferent between all election outcomes. Thus, (\ref{impossibiltiyequation}) simplifies to 
\begin{align}\label{impossibiltiyequation1}
V_{(n,k)}(\hat{m}_1, \hat{m}_2, \emptyset, \emptyset)=\{a,b\}.
\end{align}

Now consider the alternate preference profile $A'$
\begin{align*}
A_1'=\{a\}, A_2'=\{b\}, A_3'=\{c\},A_4'=\{c\}.
\end{align*}
Clearly, $W$ does not satisfy PJR for the profile $A'$ and so $\hat{m}$ must not be an equilibrium, since $V_{(n,k)}$ is assumed to fully implement JR. In particular, there must exists a voter $i\in N$ such that a unilateral deviation leads to a strictly more preferred election outcome. Since voters $1,2,3$ have the same preferences under $A$ (and $\hat{m}$ is an equilibrium for $A$) there can be no profitable deviation for these voters. Thus, voter 4 must have a strictly profitable deviation, say $\hat{m}_4'\neq \emptyset$, which leads to a new election outcome including candidate $c$; that is, 
$$V_{(n,k)}(\hat{m}_1,\hat{m}_2,\emptyset,\hat{m}_4')=W\qquad \text{such that } c\in W.$$
Furthermore, by the anonymity of the voting rule $V_{(n,k)}$ we also have 
$$V_{(n,k)}(\hat{m}_1,\hat{m}_2,\hat{m}_4', \emptyset)=W\qquad \text{such that } c\in W.$$
But this contradicts the assumption that $\hat{m}$ is an equilibrium for the original preference profile $A$, since if voter $3$ deviates and reports $\hat{m}_3=\hat{m}_4'$ in (\ref{impossibiltiyequation1}) then a strictly preferred election outcome will be produced. We conclude that there is no such voting rule $V_{(n,k)}$ -- this completes the proof.
\end{proof}

\begin{remark}
If voters have worst-case preference extensions, then the above impossibility result follows identically for random voting rules and/or if we allow mixed-strategies.
\end{remark}

We conclude by highlighting that despite the impossibility result, under the costly voting equilibrium concept, it is unclear whether the impossibility result holds for the strong costly voting equilibrium concept.

\section{Conclusion and outlook}

In this paper we proposed an alternate equilibrium approach for achieving PR in the presence of strategic voters. Within the setting of voters with dichotomous preferences over candidates, we designed a \emph{strategically robust} PR-based voting rule, called GreedyMonroe, which is not strategyproof but ensures the existence of (possibly insincere) costly voting equilibria satisfying PR with respect to voters' private and unrevealed preferences. In contrast, the commonly applied AV-rule was shown not to be strategically robust. Furthermore, we proved a limit on `how strategically robust' a PR-based voting rule can be; we showed that there is no PR-based voting rule\footnote{We assumed the voting rule is deterministic, resolute and anonymous.} which ensures that \emph{every} costly voting equilibrium satisfies PR. Collectively, these results highlight the possibility and limit of achieving PR in the presence of strategic voters and a positive role for mechanisms, such as pre-election polls, which coordinate voter behaviour towards equilibria which satisfy PR.

There still remains many natural and unanswered questions in this line of research such as whether best response dynamics converge to equilibria, extending equilibria results to more general preference domains, and extending our impossibility result to random mechanisms.

\bibliographystyle{ACM-Reference-Format}
\bibliography{abb_bl,adt_bl}

\appendix
%
%
%

\begin{acks}
	I would like to thank Haris Aziz, Gabriele Gratton, Richard Holden, and Carlos Pimienta for their helpful feedback throughout the drafting stages. 
	I also acknowledge the support provided to me by the UNSW Scientia PhD fellowship and the Data61 top-up PhD scholarship.
\end{acks}

	\section{Appendix}

\subsection{Omitted proofs Section 3}

\begin{proof}[Proof of Theorem~\ref{theorem: GreedyJR not strategyproof}]
We prove the claim explicitly via a counter-example. Let $|N|=6$, $C=\{a, b,c,d \}$ and $k=3$ with
\begin{align*}
A_1&= \{c\}\\
A_2&= \{c,b\}\\
A_3, A_4&= \{b\}\\
A_5, A_6&= \{d\}.
\end{align*}
For the purpose of a contradiction suppose that the voting rule is strategyproof; that is, $\hat{A}_i=A_i$ must be a best response for all voters. Under the sincere ballot profile $\hat{A}=A$, from voter $2$'s perspective the worst-case (non-zero probability) election outcome is $W=\{b,d,a\}$ which includes just one of voter $2$'s (truthfully) approved candidates, i.e., $|W\cap A_2|=1$. This occur if in the first iteration of the first-stage candidate $b$ is elected and then the two voters selected to have their ballot removed includes voter $2$ -- in this case, candidate $c$ will no longer have sufficient support to guarantee election.

Now, if voter $2$ deviates and instead reports the insincere ballot $\hat{A}_2'=\{c\}$ then with probability one the election outcome is $W'=\{b,c,d\}$ which includes two of voters $2$'s (truthfully) approved candidates. 

Thus if we denote the lottery under sincere voting by $L$ and the lottery under voter $2$'s unilateral deviation by $L'$, we have $\pi_{L'}^{(2)}(0)=\pi_{L}^{(2)}(0)=0$ and $\pi_{L'}^{(2)}(1)=0<\pi_{L}^{(2)}(1)$, thus $L'\spref_2 L$. That is, voter 2 has a strictly profitable deviation from $\hat{A}_2=A_2$, and hence the GreedyMonroe voting rule is not strategyproof.
\end{proof}

\subsection{Omitted proofs Section 3.1}

The AV-rule is an approval-based election rule which selects the $k$ candidates with the highest approval scores with respect to the reported ballots $\hat{A}$, i.e. the approval score of a candidate $c$ with respect to $\hat{A}$ is $s(c, \hat{A})=|\{i\in N\, :\, c\in \hat{A}_i\}|$. In case of ties, a uniform at random tie-breaking rule is assumed to apply.\footnote{Other tie-breaking rules can be applied here (deterministic or random) without affecting the properties considered in this section.} 

\begin{proposition}\label{proposition: AV-lex no costly voting equil}
Under the AV voting rule, there need not be any costly voting equilibrium supporting a PJR outcome. 
\end{proposition}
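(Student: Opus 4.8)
The plan is to exhibit a single preference profile for which costly voting equilibria exist but none produces a PJR outcome, which suffices to show the AV-rule fails strategic robustness. I would take $|N|=4$, $k=2$, $C=\{a,b,c\}$ (so that $k<|C|$ and $k$ divides $n$), with $A_1=A_2=\{a\}$ and $A_3=A_4=\{b,c\}$. The cohesive pair $\{1,2\}$ has $|A_1\cap A_2|=1$ and $|X|=2=n/k$, so PJR with $\ell=1$ forces $a\in W$; since PJR of a lottery requires this with probability one, the first step is to translate the requirement into a statement about approval scores: writing $s_x$ for the AV-score of candidate $x$, under a resolute top-$2$-of-$3$ rule with uniform random tie-breaking, $a$ is elected with probability one if and only if $s_a\ge s_b$, $s_a\ge s_c$, and the three scores are not all equal.

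Next I would reduce the strategy space by observing, via the fact that the $ul$-extension respects first-order stochastic dominance, that no voter ever strictly benefits from approving a candidate they truthfully disapprove; hence in any costly voting equilibrium we may take $\hat{A}_1,\hat{A}_2\subseteq\{a\}$ and $\hat{A}_3,\hat{A}_4\subseteq\{b,c\}$. The heart of the argument is to show that voters $3$ and $4$ strictly prefer the full ballot $\{b,c\}$ to any proper subset: adding approval to an approved candidate weakly improves their lottery by stochastic dominance and, in the configurations that arise here, strictly lowers the probability of their worst-case one-approved outcome under the $ul$-extension. This pins $s_b=s_c=2$. I would then show that voters $1$ and $2$ must both vote $\{a\}$ --- if either abstained, candidate $a$ would be crowded out and the abstainer would strictly gain by joining --- giving the score vector $(s_a,s_b,s_c)=(2,2,2)$.

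Finally I would verify that this all-vote profile is a genuine costly voting equilibrium: every voter is pivotal and strictly prefers voting to abstaining, since abstention strictly raises the probability of that voter's worst outcome. It produces a three-way tie, so the outcome $\{b,c\}$ arises with probability $1/3$; because $\{b,c\}$ violates PJR for the pair $\{1,2\}$, the supporting lottery fails PJR, and as this is the only equilibrium configuration, no costly voting equilibrium supports a PJR outcome.

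I expect the main obstacle to be the reduction and uniqueness step: ruling out ``clever'' asymmetric equilibria in which voters $3$ and $4$ partially withhold support so that, say, $s_b=2>s_c$ and $a$ slips in with probability one. The crux is the $ul$-extension computation showing that full support $\{b,c\}$ strictly dominates any partial ballot for voters $3,4$ by reducing the likelihood of their one-approved outcome --- exactly the incentive that the sharp $n/k$ threshold of GreedyMonroe neutralises, but that the AV-rule, lacking such a threshold and resolving equal scores by a random tie-break, cannot prevent.
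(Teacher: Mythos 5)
Your instance and your final computation are correct --- the all-vote profile $(\{a\},\{a\},\{b,c\},\{b,c\})$ does give scores $(2,2,2)$, is a costly voting equilibrium, and its lottery elects $\{b,c\}$ with probability $1/3$, so it fails PJR --- but exhibiting one bad equilibrium does not prove the proposition. You need that \emph{no} costly voting equilibrium supports a PJR outcome, and the two steps carrying your uniqueness claim have genuine gaps. First, the reduction to $\hat{A}_i\subseteq A_i$ is invalid: the fact that approving a disapproved candidate is never \emph{strictly} beneficial only makes such ballots weakly dominated, and weakly dominated ballots can still be best responses in equilibrium; the costly voting refinement forces abstention only when abstention is weakly preferred to \emph{every} ballot, so it does not prune payoff-equivalent insincere ballots. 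For a non-existence argument you must rule out all equilibria, not restrict attention to a convenient subclass. Second, the ``heart'' claim --- that voters $3$ and $4$ strictly prefer the full ballot $\{b,c\}$ to any proper subset, pinning $s_b=s_c=2$ --- is configuration-dependent and false in general. Whenever even full support cannot lift both $b$ and $c$ to $s_a$ (i.e.\ $\min\{s_b,s_c\}+1<s_a$ after the deviation), the pigeonhole fact that every size-$2$ committee contains at least one of $\{b,c\}$ makes \emph{all} of voter $3$'s ballots induce the same lottery ($\pi^{(3)}(1)=1$, $\pi^{(3)}(2)=0$), so she is indifferent among everything and costly voting forces her to \emph{abstain}, not to vote $\{b,c\}$. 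So ruling out the asymmetric and insincere equilibria --- precisely the obstacle you flag --- is exactly what is missing from the proposal.

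The repair is that uniqueness is not needed at all, and the paper's shorter contradiction argument (run on the near-identical profile $A_1=A_2=\{a\}$, $A_3=\{b,c\}$, $A_4=\emptyset$) works verbatim on your profile. Suppose some costly voting equilibrium supports PJR, i.e.\ elects $a$ with probability one; then $\pi^{(3)}(2)=0$, and in equilibrium no ballot of voter $3$ can achieve $\pi^{(3)}(2)>0$ (else it would be a strictly profitable deviation under the $ul$-extension), so by the pigeonhole observation all her ballots are payoff-equivalent and she must abstain; likewise voter $4$. Blocking voter $3$'s deviation to $\{b,c\}$ then forces $\min\{s_b,s_c\}+1<s_a\le 2$, hence $s_a=2$ and $\min\{s_b,s_c\}=0$; but then after voter $1$ abstains the remaining score vectors still elect $a$ with probability one, so voter $1$ is indifferent and costly voting forces her out --- contradicting $s_a=2$. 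This replaces your entire characterization step. One genuinely nice feature of your instance worth keeping: since the all-vote profile \emph{is} an equilibrium there, your example shows the failure is not vacuous (equilibria exist but none supports PJR), whereas in the paper's instance the same argument pattern suggests no costly voting equilibrium exists at all.
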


\begin{proof}
Let $|N|=4, k=2, C=\{a,b,c\}$ and the preference profile be given as follows:
\begin{align*}
A_1, A_2&=\{a\}\\
A_3&=\{b,c\}\\
A_4&=\emptyset.
\end{align*}
For the purpose of a contradiction suppose that there exists a reported ballot profile $\hat{A}$ such that PJR is satisfied in every non-zero probability election outcome; that is, candidate $a$ is elected with probability one. 

First note $A_4=\emptyset$, and so it must be that $\hat{A}_4=\emptyset$ since $\hat{A}$ is a costly voting equilibrium. Furthermore, notice that since candidate $a$ is elected with probability one in equilibrium, voter 3 must abstain from voting since for every (non-zero probability) election outcome in this equilibrium, say $W$, we have $|W \cap A_3|=1$; that is, $\pi_L^{(3)}(1)=1$ where $L$ is the equilibrium induced lottery. But by the pigeon-hole principle \emph{every} election outcome $W'$ of size $k=2$ guarantees $|W'\cap A_3|\ge 1$ -- and so voter 3 can do no worse by abstaining. We conclude that if PJR is satisfied in a costly voting equilibrium then $\hat{A}_3=\hat{A}_4=\emptyset$. 

It now follows that the approval scores are such that 
$$ \min\{s(b, \hat{A}),s(c, \hat{A})\}+1< s(a, \hat{A}),$$
otherwise, voter $3$ could deviate from $\hat{A}_3=\emptyset$ and submit $\hat{A}_3'=\{b,c\}$ -- this would induce a new lottery $L'$ such that $\pi_{L'}^{(3)}(1)<1=\pi_L^{(3)}(1)$, and hence $L'\spref_3 L$ which would contradict the equilibrium $\hat{A}$. In addition, $\min\{s(b, \hat{A}),s(c, \hat{A})\}\ge 0$ and $s(a, \hat{A})\le 2$ since at most 2 voters submit non-empty ballots. Thus,
\begin{align*}
0\le &\min\{s(b, \hat{A}),s(c, \hat{A})\}+1< s(a, \hat{A})\le 2,
\end{align*}
which implies, due to the integer-valued $s(\cdot, \cdot)$ function, that
\begin{align*}
\min\{s(b, \hat{A}),s(c, \hat{A})\}=0\quad \text{ and } \quad s(a, \hat{A})= 2.
\end{align*}
We immediately infer that $a\in \hat{A}_1, \hat{A}_2$. Without loss of generality assume $s(c, \hat{A})=0$. Now suppose voter $1$ deviates and abstains from voting, i.e, $\hat{A}_1'=\emptyset$, leading to the new (off-equilibrium) ballot profile $\hat{A}=(\hat{A}_1', \hat{A}_{-1})$ and induced lottery $L'$. Under this lottery, since $s(c,\hat{A}')=0$ and $s(a, \hat{A}')=1$, candidate $a$ is elected with probability one under $L'$ and so voter $1$ is indifferent between the ballots $\hat{A}_1\neq \emptyset$ and $\hat{A}_1'=\emptyset$. Thus, voter 1 will optimally deviate from $\hat{A}$ -- this contradicts the assumption that $\hat{A}$ is a costly voting equilibrium. We conclude that under the AV-rule, in this instance there is no PJR outcome supported by a costly voting equilibrium.
\end{proof}

%
%
%
%

%
%
%
%

\end{document}